\newtheorem{theorem}{Theorem}[section]
\newtheorem{lemma}[theorem]{Lemma}
\newtheorem{claim}[theorem]{Claim}
\newtheorem{definition}[theorem]{Definition}
\newtheorem{corollary}[theorem]{Corollary}
\newtheorem{fact}[theorem]{Fact}
\newtheorem{proposition}[theorem]{Proposition}
\newtheorem{observation}[theorem]{Observation}
\newtheorem*{theorem*}{Theorem}
\newtheorem*{corollary*}{Corollary}
\newcommand{\suchthat}{\;\ifnum\currentgrouptype=16 \middle\fi|\;}
\newcommand{\F}{\mathbb{F}}
\newcommand{\C}{\mathbb{C}}
\newcommand{\R}{\mathbb{R}}
\newcommand{\N}{\mathbb{N}}
\newcommand{\Z}{\mathbb{Z}}
\newcommand{\Prob}{\mathbb{P}}
\DeclareMathOperator{\Ber}{Ber}
\DeclareMathOperator{\RS}{RS}
\newcommand{\E}{{\rm I\kern-.3em E}}
\begin{document}

\title{On the $\text{AC}^0[\oplus]$ complexity of Andreev's Problem}
\author{
Aditya Potukuchi  \thanks{Department of Computer Science,  Rutgers University. {\tt aditya.potukuchi@cs.rutgers.edu}. Research supported in part by NSF grant 
CCF-1514164} 
}

\maketitle

\begin{abstract}
Andreev's Problem states the following: Given an integer $d$ and a subset of $S \subseteq \F_q \times \F_q$, is there a polynomial $y = p(x)$ of degree at most $d$ such that  for every $a \in \F_q$, $(a,p(a)) \in S$? We show an $\text{AC}^0[\oplus]$ lower bound for this problem. 

This problem appears to be similar to the list recovery problem for degree $d$-Reed-Solomon codes over $\F_q$ which states the following: Given subsets $A_1,\ldots,A_q$ of $\F_q$, output all (if any) the Reed-Solomon codewords contained in $A_1\times \cdots \times A_q$. For our purpose, we study this problem when $A_1, \ldots, A_q$ are random subsets of a given size, which may be of independent interest.
\end{abstract}

\section{Introduction}

For a prime power $q$, let us denote by $\F_q$, the finite field of order $q$. Let us denote the elements of $\F_q = \{a_1,\ldots, a_q\}$. One can think of $a_1,\ldots,a_q$ as some ordering of the elements of $\F_q$. Let $\mathcal{P}_d = \mathcal{P}_d^{q}$ be the set of all univariate polynomials of degree at most $d$ over $\F_q$.  Let us define the problem which will be the main focus of this paper:

\begin{itemize}
\item[]\textbf{Input:} A subset $S \subset \F_q^2$, and integer $d$.
\item[]\textbf{Output:} Is there a $p \in \mathcal{P}_q^d$ such that $\{(a_i,p(a_i)) \suchthat i\in [q]\} \subseteq S$?
\end{itemize}

The problem of proving $\text{NP}$-hardness of the above function seems to have been first asked in~\cite{J86}. It was called `Andreev's Problem' and still remains open. One may observe that above problem is closely related to the \emph{List Recovery of Reed-Solomon codes}. In order to continue the discussion, we first define Reed-Solomon codes:

\begin{definition}[Reed-Solomon code]
\label{def:RS}
The degree $d$ Reed-Solomon over $\F_q$, abbreviated  as $\RS[q,d]$ is the following set:
\[
\RS[q,d] = \{(p(a_1),\ldots,p(a_q)) \suchthat p \in \mathcal{P}_{d}^q\}
\] 
\end{definition}

Reed-Solomon codes are one of the most widely (if not the most widely) studied families of error-correcting codes. It can be checked that $\RS[q,d]$ is a $d+1$-dimensional subspace of $\F_q^q$ such that every non-zero vector has at least $q - d$ non-zero coordinates. In coding theoretic language, we say that $\RS[q,d]$ is a \emph{linear code} of \emph{block length} $q$, \emph{dimension} $d+1$ and \emph{distance} $q - d$. The \emph{rate} of the code is given by $\frac{d}{q}$. The set of relevant facts about Reed-Solomon codes for this paper may be found in Appendix~\ref{sec:RS}

The \emph{List Recovery} problem for a code $\mathcal{C} \subset \F_q^n$ is defined as follows:

\begin{definition}[List Recovery problem for $\mathcal{C}$]
\begin{itemize}
\item[]
\item[]\textbf{Input:} Sets $A_1,\ldots, A_n \subseteq \F_q$.
\item[]\textbf{Output:} $\mathcal{C} \cap (A_1 \times\cdots \times A_n)$
\end{itemize}
\end{definition}

Given the way we have defined these problems, one can see that Andreev's Problem is essentially proving $\text{NP}$-hardness for the List Recovery of Reed-Solomon codes where one just has to output a Boolean answer to the question 
\[
\mathcal{C} \cap (A_1 \times\cdots \times A_n) \neq \emptyset?
\] 

Indeed, let us consider a List Recovery instance where the code $\mathcal{C}$ is $\RS[q,d]$, and the input sets are given by $A_1,\ldots,A_q$. Let us identify $(A_1,\ldots, A_q)$ with the set
\[
S = \bigcup _{i \in [q]}\{(a_i,z) \suchthat z \in A_i\} \subseteq \F_q^2
\]
and let us identify every codeword $w = (w_1,\ldots,w_q) \in \mathcal{C}$, with a set $w_{\text{set}} = \{(a_i,w_i) \suchthat i\in [q]\}$. Clearly, we have that $w \in A_1\times \cdots \times A_q$ if and only if $w_{\text{set}} \subseteq S$. Often, we will drop the subscript on $w_{\text{set}}$ and refer to $w$ both as a codeword, and as the set of points it passes through. Further identifying $\F_q^2$ with $[q^2]$, and  and parameterizing the problem by $r = \frac{d}{q}$, we view Andreev's Problem as a Boolean function $\text{AP}_r:\{0,1\}^{q \times q} \rightarrow \{0,1\}$.

The main challenge here is to prove (or at least conditionally disprove) $\text{NP}$-hardness for Andreev's Problem, which has been open for over $30$ years. Another natural problem one could study is the circuit complexity for $\text{AP}_r$. This is the main motivation behind this paper, and we will study the $\text{AC}^0[\oplus]$ complexity of $\text{AP}_r$. We shall eventually see that even this problem needs relatively recent results about the power of $\text{AC}^0[\oplus]$ in our proof. Informally, $\text{AC}^0$ is the class of Boolean functions computable by circuits of constant depth, and polynomial size, using $\land$, $\lor$, and $\lnot$ gates. $\text{AC}^0[\oplus]$ is the class of Boolean functions computable by circuits of constant depth, and polynomial size, using $\land$, $\lor$, $\lnot$, and $\oplus$ ($\text{MOD}_2$) gates. The interested and unfamiliar reader is referred to~\cite{AB09} (Chapter $14$) for a more formal definition and further motivation behind this class. We show that $\text{AP}_r$ cannot be computed by $\text{AC}^0$ circuits for a constant $r$. This type of result is essentially motivated by a similar trend in the study of the complexity of \emph{Minimum Circuit Size Problem}. Informally, the Minimum Size Circuit Problem (or simply $\text{MCSP}$) takes as input a truth table of a function on $m$ bits, and an integer $s$. The output is $1$ if there is a Boolean circuit that computes the function with the given truth table and has size at most $s$. It is a \emph{major} open problem to show the $\text{NP}$-hardness of $\text{MCSP}$. A lot of effort has also gone into understanding the circuit complexity of $\text{MCSP}$. Allender et al.~\cite{ABKMR06} proved a superpolynomial $\text{AC}^0$ lower bound, and Hirahara and Santanam~\cite{HS17} proved an almost-quadratic formula lower bound for $\text{MCSP}$. A recent result by Golonev et al.~\cite{GIIKKT19} extends~\cite{ABKMR06}  and proves an $\text{AC}^0[\oplus]$ lower bound for $\text{MCSP}$. Thus one can seek to answer the same question about $\text{AP}_r$.

 We now state our main theorem:

\begin{theorem}[Main Theorem]
\label{thm:AC0}
For any prime power $q$, and $r \in (0,1)$, we have that any depth $h$ circuit with $\land$, $\lor$, $\lnot$, and $\oplus$ gates that computes $\text{AP}_r$ on $q^2$ bits must have size at least $\exp\left(\tilde{\Omega}\left(hq^{\frac{c^2}{h-1}} \right) \right)$.
\end{theorem}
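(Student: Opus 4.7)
The plan is to apply the Razborov--Smolensky polynomial approximation method. For any depth-$h$, size-$s$ circuit $C$ with $\land, \lor, \lnot, \oplus$ gates, one can construct a probabilistic polynomial $P : \F_3^{q^2} \to \F_3$ of degree $D = O((\log s)^{h-1})$ which, on every fixed input, agrees with $C$ with probability $1 - o(1)$. It therefore suffices to exhibit a distribution $\mu$ on inputs $S \in \{0,1\}^{q^2}$ such that no polynomial over $\F_3$ of degree $D_0 = \tilde{\Omega}(q^{c^2})$ can approximate $\text{AP}_r$ under $\mu$ with error $o(1)$; inverting the relation $D = (\log s)^{h-1}$ then yields $\log s \geq \tilde{\Omega}(q^{c^2/(h-1)})$, which is the stated bound.

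For the hard distribution I would use exactly the random list-recovery instance featured in the abstract: draw $A_1, \ldots, A_q \subseteq \F_q$ independently and uniformly among the size-$k$ subsets of $\F_q$, and set $S = \bigcup_i \{a_i\} \times A_i$. A first-moment computation gives $\E[N] = q^{d+1}(k/q)^q$, where $N$ is the number of codewords of $\RS[q,d]$ whose graph lies in $S$. By choosing $k$ near the critical value $k^* \approx q^{1-r}$ that makes $\E[N]$ of constant order, and then perturbing $k$ slightly below and above $k^*$, one obtains two distributions $\mu_0, \mu_1$ under which $\text{AP}_r = 0$ (resp.\ $\text{AP}_r = 1$) with probability $1 - o(1)$. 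It suffices to show $\mu_0$ and $\mu_1$ are indistinguishable by any degree-$D_0$ polynomial over $\F_3$.

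The indistinguishability I would prove via moment matching. Using the $\F_3$-character expansion, any polynomial of degree $\leq D_0$ distinguishing $\mu_0$ from $\mu_1$ must have a low-degree coefficient whose correlation with $S$ differs between the two distributions. I would bound these by computing $\E_\mu[N^t]$ for all $t \leq D_0$: the $t$-th moment decomposes as a sum over $t$-tuples of $\RS[q,d]$-codewords, each contribution being determined by the total number of distinct points in the union of the codeword graphs. The MDS property of the Reed--Solomon code (any two distinct codewords meet in at most $d$ points, as recorded in Appendix~\ref{sec:RS}), together with the explicit MDS weight enumerator, controls these union sizes sharply and lets one compare $\E_{\mu_0}[N^t]$ with $\E_{\mu_1}[N^t]$ up to $o(1)$. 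It follows that every low-degree Fourier coefficient of $\mathbf{1}[\text{AP}_r = 1]$ is essentially the same under the two distributions, ruling out a low-degree distinguisher.

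The main obstacle is pushing the moment control to as large a $D_0$ as possible: the potentially dominant contributions come from ``clustered'' tuples of codewords that share many common coordinates, and ruling out their dominance requires a rather delicate structural analysis of how many $t$-tuples of $\RS[q,d]$-codewords can have a prescribed pairwise intersection pattern. This is exactly the list-recovery-for-random-subsets analysis the abstract flags as being of independent interest. Once the moment bounds go through up to degree $D_0 = \tilde{\Omega}(q^{c^2})$, the conversion into a polynomial approximation lower bound and the final inversion via Razborov--Smolensky are routine, and an optimization over the trade-off between degree of moment control and sharpness of the match yields the exponent $c^2/(h-1)$ in the theorem.
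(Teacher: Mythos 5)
Your high-level architecture differs from the paper's: the paper does not run Razborov--Smolensky itself, but instead shows that $\text{AP}_r$ solves a biased coin problem (via Lemma~\ref{lem:smallp} and Theorem~\ref{thm:sharpthreshold}), converts biased coins to unbiased ones in constant depth (Lemmas~\ref{lem:bias} and~\ref{lem:biasedcoin}), and then invokes the coin-problem lower bound of~\cite{LSSTV19,CHLT19} (Theorem~\ref{thm:LSSTV}) as a black box. That difference would be fine if your route went through, but there is a genuine gap in your indistinguishability step. The two distributions $\mu_0,\mu_1$ you propose are product (or near-product) distributions whose per-coordinate densities differ by a factor $1\pm\epsilon$; such distributions are \emph{not} indistinguishable by low-degree polynomials in any moment-matching sense --- they already differ on degree-one moments, and a real-valued degree-one statistic (the Hamming weight of $S$) distinguishes them with advantage $1-o(1)$ once $\epsilon \gg (pq^2)^{-1/2}$. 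What is actually needed is the quantitative statement that a degree-$D$ polynomial over $\F_3$ cannot distinguish $\Ber(p(1-\epsilon))^{q^2}$ from $\Ber(p(1+\epsilon))^{q^2}$ with large advantage unless $D \gtrsim 1/\epsilon$; this is precisely the coin-problem lower bound of~\cite{CHLT19}, a substantial recent result that cannot be recovered by naive moment matching. Moreover, the moments you propose to compute are of the wrong object: $\E_\mu[N^t]$, the moments of the codeword count, do not control the behaviour of an arbitrary degree-$D_0$ polynomial in the $q^2$ input bits; for that you would need $\E_{\mu_0}\bigl[\prod_{(i,j)\in T}S_{ij}\bigr] \approx \E_{\mu_1}\bigl[\prod_{(i,j)\in T}S_{ij}\bigr]$ for all point sets $T$ with $|T|\le D_0$, and as noted these fail to match already at $|T|=1$.

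Even setting that aside, the moment computation itself is blocked: the paper shows (in ``What doesn't work, and why'') that already $\E[N^2]=\Omega\left(e^{1/p}\E^2[N]\right)$, so for $p=q^{-\Omega(1)}$ the second moment is superpolynomially larger than the square of the first, and higher moments are worse; the ``clustered'' tuples you flag as the main obstacle genuinely dominate rather than being controllable by the MDS weight enumerator. The paper's escape on the supercritical side is to abandon moments of $N$ entirely and instead write $X$ in the Fourier basis over $\F_q$ as $\sum_{\alpha\in\mathcal{C}^{\perp}}\prod_i\widehat{g_i}(\alpha_i)$, bounding the contribution of nonzero frequencies in second moment (Appendix~\ref{sec:sharpthreshold}). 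So both halves of your plan --- establishing that $\mu_1$ is supercritical with the required quantitative threshold interval, and establishing low-degree indistinguishability --- require tools (the Fourier argument over $\F_q$, and Theorem~\ref{thm:LSSTV} together with the bias-conversion of Lemma~\ref{lem:bias}, respectively) that the proposal does not supply.
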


We make a couple of comments about the theorem. The first, most glaring aspect is that $r \in (0,1)$ is more or less a limitation of our proof technique. Of course, as $r$ gets \emph{very} small, i.e., $r = O\left(\frac{1}{q}\right)$, one can find depth $2$ circuits of size $q^{O(rq)} = q^{O(1)}$. But, we do not know that the case where, for example, $r = \Theta\left( \frac{1}{\log q}\right)$ is any easier for $\text{AC}^0$. Secondly, we are not aware of a different proof of an $\text{AC}^0$ (as opposed to an $\text{AC}^0[\oplus]$) lower bound.

\subsection{Some notation and proof ideas}

For a $p \in (0,1)$, let $X_1, X_2, \ldots$ denote independent $\Ber(p)$ random variables. For a family of Boolean functions $f:\{0,1\}^n \rightarrow \{0,1\}$, we use $f^{(n)}(p)$ to denote the random variable $f(X_1,\ldots, X_n)$. 

\begin{definition}[Sharp threshold]
For a monotone family of functions $f$, we say that $f$ has a \emph{sharp threshold} at $p$ if for every $\epsilon > 0$, there is an $n_0$ such that for every $n > n_0$, we have that $\Prob(f^{(n)}(p(1 - \epsilon)) = 0) \geq 0.99$, and $\Prob(f^{(n)}(p(1 + \epsilon)) = 1) \geq 0.99$.
\end{definition}

Henceforth, we shall assume that $q$ is a very large prime power. So, all the high probability events and asymptotics are as $q$ grows. Where there is no ambiguity, we also just use $f(p)$ to mean $f^{(n)}(p)$ and $n$ growing.

One limitation of $\text{AC}^0[\oplus]$ that is exploited when proving lower bounds (including in~\cite{GIIKKT19}) for monotone functions is that $\text{AC}^0[\oplus]$ cannot compute functions with `very' sharp thresholds. For a quantitative discussion, let us call the smallest $\epsilon$ in the definition above the \emph{threshold interval}. It is known that $\text{AC}^0$ (and therefore, $\text{AC}^0[\oplus]$) can compute (some) functions with threshold interval of $O\left(\frac{1}{\log n}\right)$, for example, consider the following function on Boolean inputs $z_1,\ldots,z_n$: Let $Z_1 \sqcup \cdots \sqcup Z_{\ell}$ be an equipartition of $[n]$, such that each $|Z_i| \approx \log n  - \log \log n $. Consider the function given by 

\[
f(z_1,\ldots, z_n) = \bigvee_{i \in [\ell]} \left( \bigwedge_{j \in Z_i} z_j \right).
\]

This is commonly known as the \emph{tribes} function and is known to have a threshold interval of $O\left( \frac{1}{\log n}\right)$. This is clearly computable by an $\text{AC}^0$ circuit. A construction from~\cite{LSSTV19} gives an $\text{AC}^0$ circuit (in $n$ inputs) of size $n^{O(h)}$ and depth $h$ that has a threshold interval $\tilde{O}\left( \frac{1}{(\log n)^{h - 1}}\right)$. A remarkable result from~\cite{LSSTV19} and~\cite{CHLT19} (Theorem 13 from~\cite{CHLT19} and Lemma $3.2$ in~\cite{A19}) says that this is in some sense, tight. Formally,

\begin{theorem}[~\cite{LSSTV19}~\cite{CHLT19}]
\label{thm:LSSTV}
Let $n$ be any integer and $f :\{0,1\}^n \rightarrow \{0,1\}$ be a function with threshold interval $\delta$ at $\frac{1}{2}$. Any depth $h$ circuit with $\land$, $\lor$, $\lnot$, and $\oplus$ gates that computes $f$ must have size at least $\exp\left(\Omega\left(h \left(1/\delta\right)^{\frac{1}{h - 1}} \right)\right)$.
\end{theorem}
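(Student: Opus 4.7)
The plan is to use the polynomial method. An efficient AC$^0[\oplus]$ circuit should admit a low-degree multilinear real polynomial approximation under $p$-biased product distributions for a range of $p$, while no low-degree polynomial can witness a sharp threshold around $p = 1/2$. Chaining these two bounds yields the claimed size lower bound.

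The first step is to invoke a Razborov--Smolensky-type approximation theorem: any depth-$h$, size-$s$ circuit $C$ of the allowed type admits a multilinear polynomial $P \in \mathbb{R}[x_1,\ldots,x_n]$ of degree $D = \tilde{O}\bigl(((\log s)/h)^{h-1}\bigr)$ such that $\Prob_{x \sim \Ber(p)^n}[P(x) \neq C(x)] \leq 0.01$ for every $p \in [0,1]$. Given such a $P$, let $Q(p) \defeq \E_{x \sim \Ber(p)^n}[P(x)] = P(p, p, \ldots, p)$, which is a univariate polynomial in $p$ of degree at most $D$ by multilinearity. Since $f$ is monotone, $\mu_p(f) \defeq \Prob[C(X)=1 \mid X \sim \Ber(p)^n]$ is monotone in $p$, takes values in $[0,1]$, and by the sharp-threshold hypothesis changes from $\leq 0.01$ at $p = \tfrac{1}{2}(1-\delta)$ to $\geq 0.99$ at $p = \tfrac{1}{2}(1+\delta)$. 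The approximation guarantee gives $|Q(p) - \mu_p(f)| \leq 0.01$ for all $p$, so $Q$ is bounded in $[-0.02, 1.02]$ throughout $[0,1]$ and changes by at least $0.96$ on the interval of length $\delta$ centered at $1/2$.

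The second step is to invoke Bernstein's inequality: a polynomial of degree $D$ bounded by $O(1)$ on $[0,1]$ has derivative at the midpoint $p = 1/2$ of order $O(D)$, and more generally $|Q'(p)| = O(D)$ for $p$ bounded away from the endpoints. Integrating, $Q$ can change by at most $O(D \cdot \delta)$ over an interval of length $\delta$ centered at $1/2$. Combined with the change of at least $0.96$ established above, one concludes $D \cdot \delta \geq \Omega(1)$, i.e., $D \geq \Omega(1/\delta)$. Plugging in the bound from the first step gives $((\log s)/h)^{h-1} \geq \tilde{\Omega}(1/\delta)$, which rearranges to $\log s \geq \Omega\bigl(h \cdot (1/\delta)^{1/(h-1)}\bigr)$, as claimed after exponentiation.

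The main obstacle is the approximation step: standard Razborov--Smolensky produces a polynomial tailored to a single (typically uniform) input distribution, whereas here one needs a single low-degree polynomial that approximates $C$ simultaneously under $\Ber(p)^n$ for all relevant $p$ (using monotonicity of $\mu_p(f)$ to extend beyond the threshold window). Moreover, one needs the depth-dependent degree bound $((\log s)/h)^{h-1}$, substantially tighter than the $(\log s)^{h}$ coming from a naive bottom-up gate-by-gate induction; this is what powers the sharp exponent $\tfrac{1}{h-1}$ in the final bound. Both refinements---simultaneous biased-input approximation and the factor-$h$ savings in the exponent---are the technical heart of the arguments in~\cite{LSSTV19} and~\cite{CHLT19}, and replacing either one with a naive substitute degrades the conclusion.
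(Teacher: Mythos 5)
First, note that the paper does not actually prove this statement: Theorem~\ref{thm:LSSTV} is imported wholesale from~\cite{LSSTV19} and~\cite{CHLT19}, so there is no internal proof to compare against. Judged on its own terms, your proposal has a fatal gap at its first step. The claimed Razborov--Smolensky-type approximation --- a single \emph{real} multilinear polynomial of polylogarithmic degree agreeing with the circuit with probability $0.99$ under $\Ber(p)^n$, in particular at $p=\tfrac12$ --- is false once $\oplus$ gates are allowed. The parity of all $n$ inputs is a size-$O(n)$, constant-depth circuit in this class, yet any real polynomial $q$ of degree less than $n$ satisfies $\E_{x\sim\Ber(1/2)^n}\left[(1-2q(x))(-1)^{x_1+\cdots+x_n}\right]=0$, so a Boolean-valued $q$ of low degree agrees with parity on exactly half of the uniform distribution, and no low-degree real polynomial can agree with it on a $0.99$ fraction. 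Razborov's construction produces low-degree \emph{probabilistic polynomials over $\F_2$}, and for those your second step collapses: if $P$ is an $\F_2$-polynomial of degree $D$, the map $p\mapsto\E_{x\sim\Ber(p)^n}[P(x)]$ is the $p$-biased expectation of the Boolean function computed by $P$, whose real multilinear representation can have degree up to $n$ (again, parity), so it is not a univariate polynomial of degree $D$ in $p$ and the Bernstein/Markov step gives nothing.

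The missing piece --- a lower bound for the coin problem against low-degree $\F_2$ polynomials, or a route avoiding real approximating polynomials altogether --- is precisely the technical content of the cited works. \cite{LSSTV19} combines $\F_2$ probabilistic polynomials (with a depth-balanced degree bound of the shape $O(\log s/h)^{h-1}$, which you correctly identify as necessary for the exponent $\frac{1}{h-1}$) with a separate argument that a degree-$D$ polynomial over $\F_2$ cannot solve the $\delta$-coin problem unless $D=\tilde{\Omega}(1/\delta)$; the route this paper actually invokes (Theorem 13 of \cite{CHLT19} together with Lemma 3.2 of \cite{A19}) is Fourier-analytic, bounding the level-two Fourier $L_1$ mass of $\text{AC}^0[\oplus]$ circuits and showing that solving the coin problem forces that mass to be large. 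Your outline would be essentially correct for $\text{AC}^0$ without parity gates, where pointwise real probabilistic polynomials do exist and the univariate specialization plus Bernstein's inequality is a standard argument, but it does not extend to $\text{AC}^0[\oplus]$.
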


In~\cite{LSSTV19}, this was studied as the \emph{Coin Problem}, which we will also define in Section~\ref{sec:AC0}. Given the above theorem, a natural strategy suggests itself. If we could execute the following two steps, then we would be done:

\begin{itemize}
\item[1.] Establish Theorem~\ref{thm:LSSTV} for functions with thresholds at points other than $\frac{1}{2}$.
\item[2.] Show that $\text{AP}_r$ has a sharp threshold at $q^{-r}$ with a suitably small threshold interval, i.e., $\frac{1}{\operatorname{poly} q}$.\end{itemize}

The first fact essentially reduces to approximating $p$-biased coins by unbiased coins in constant depth. Though we are unable to find a reference for this, this is relatively straightforward, and is postponed to Appendix~\ref{sec:asec4}. Understanding the second part, naturally leads us to study $\text{AP}_r(p)$ for some $p = p(q)$. Let $A_1,\ldots, A_q \subset \F_q$ be independently chosen random subsets where each element is included in $A_i$ with probability $p$. Let $\mathcal{C}$ be the $\RS[q,rq]$ code. We have $|\mathcal{C}| = q^{rq + 1}$. Let us denote 

\[
X := |(A_1\times \cdots \times A_q) \cap \mathcal{C}|.
\] 

For $w \in \mathcal{C}$, let $X_w$ denote the indicator random variable for the event $\{w \in A_1 \times \cdots \times A_q\}$. Clearly, $X = \sum_{w \in \mathcal{C}}X_w$, and for every $w \in \mathcal{C}$, we have $\Prob(X_w = 1) = p^{q}$. We first note that for $\epsilon = \omega\left(\frac{\log q}{q}\right)$, and $p = q^{-r}(1 - \epsilon)$, we have, using linearity of expectation,

\begin{align*}
\E[X]  & = \sum_{w \in \mathcal{C}}\E[X_w] \\
& = |\mathcal{C}| \cdot (q^{-r}(1 - \epsilon))^q \\
& = q^{rq + 1}\left(q^{-r}(1 - \epsilon\right))^q \\
& = q\cdot (1 - \epsilon)^q \\
& \leq q \cdot e^{-\epsilon q} \\
& = o(1).
\end{align*}

When $p = q^{-r}(1 + \epsilon)$, using a similar calculation as above, we have

\[
\E[X] = q \cdot (1 + \epsilon)^q \geq q.
\] 

To summarize, for $\epsilon = \omega \left( \frac{\log q}{q}\right)$, and $p = q^{-r}(1 - \epsilon)$, $\E[X] \rightarrow 0$, and for $p = q^{-r}(1 + \epsilon)$, $\E[X] \rightarrow \infty$.

\begin{lemma}
\label{lem:smallp}
For $\epsilon =  \omega\left(\frac{\log q}{q}\right)$, we have 
\[
\Prob(\text{AP}_r(q^{-r}(1 - \epsilon)) = 1) \leq \exp\left(- \Omega(\epsilon q)\right).
\]
\end{lemma}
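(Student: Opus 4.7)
The plan is to deduce the lemma directly from the expectation computation that has already been set up in the discussion just preceding the statement, using nothing more than Markov's inequality.

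First, I would observe that the event $\{\text{AP}_r(p)=1\}$ is precisely the event $\{X \geq 1\}$: a random input to $\text{AP}_r$ at density $p$ corresponds to independent $\Ber(p)$ random sets $A_1,\ldots,A_q \subseteq \F_q$, and $\text{AP}_r$ outputs $1$ iff there exists some $w \in \mathcal{C} = \RS[q,rq]$ with $w \in A_1\times\cdots\times A_q$, which is iff $X = \sum_{w\in\mathcal{C}} X_w \geq 1$.

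Next I would apply Markov's inequality:
\[
\Prob(\text{AP}_r(p) = 1) \;=\; \Prob(X \geq 1) \;\leq\; \E[X].
\]
The excerpt has already shown that for $p = q^{-r}(1-\epsilon)$,
\[
\E[X] \;=\; q \cdot (1-\epsilon)^q \;\leq\; q \cdot e^{-\epsilon q}.
\]

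Finally, I would argue that the factor of $q$ can be absorbed into the exponent. Since $\epsilon = \omega(\log q / q)$, we have $\epsilon q = \omega(\log q)$, so $\log q = o(\epsilon q)$, and therefore
\[
q \cdot e^{-\epsilon q} \;=\; \exp\bigl(\log q - \epsilon q\bigr) \;=\; \exp\bigl(-\epsilon q (1 - o(1))\bigr) \;=\; \exp\bigl(-\Omega(\epsilon q)\bigr),
\]
giving the claimed bound. There is no real obstacle here; the content of the lemma is entirely in the first-moment calculation already performed, and the assumption $\epsilon = \omega(\log q / q)$ is precisely what is needed to swallow the union-bound-style factor of $|\mathcal{C}|/q^{rq} = q$ into the exponential decay.
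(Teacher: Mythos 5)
Your proof is correct and follows the same route as the paper: identify $\{\text{AP}_r(p)=1\}$ with $\{X\geq 1\}$, apply Markov's inequality, and use the first-moment bound $\E[X]\leq q\,e^{-\epsilon q}$ together with $\epsilon q=\omega(\log q)$ to absorb the factor of $q$. The paper's proof is a one-liner that omits the last absorption step you spell out, but the argument is identical.
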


\begin{proof}
This is just Markov's inequality. We have $\Prob(\text{AP}_r(p(1- \epsilon)) = 0) = \Prob(X \geq 1) \leq \E[X] \leq q \cdot e^{-\epsilon q} = \exp\left(-\Omega(\epsilon q)\right)$.
\end{proof}

This counts for half the proof of the sharp threshold for $\text{AP}_r$. The other half forms the main technical contribution of this work. We show the following:

\begin{theorem}
\label{thm:sharpthreshold}
Let $q$ be a prime power, $r = r(q)$ and $\epsilon = \epsilon(q)$ be real numbers such that $q^{-r} \geq \frac{\log q}{q}$ and $\epsilon = \omega\left(\max\left\{q^{-r}, \sqrt{q^{r - 1}  \log \left(q^{1 - r}\right)}\right\}\right)$.

Let $A_1,\ldots,A_q$ be independently chosen random subsets of $\F_q$ with each point picked independently with probability $q^{-r}(1 + \epsilon)$. Then 
\[
\Prob((A_1 \times \cdots \times A_q) \cap \RS[q,rq] = \emptyset) = o(1).
\]
\end{theorem}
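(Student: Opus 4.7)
My plan is a second-moment / Chebyshev argument on $X := |\mathcal{C} \cap (A_1 \times \cdots \times A_q)|$, where $\mathcal{C} = \RS[q, rq]$, preceded by concentrating the $|A_i|$'s. The second term in the hypothesis, $\epsilon \gg \sqrt{q^{r-1}\log(q^{1-r})}$, appears naturally from a Chernoff bound combined with a union bound over $i \in [q]$: it ensures that $|A_i| = (1 + o(1))\,q^{1-r}(1+\epsilon)$ for every $i$ with probability $1 - o(1)$, and I would condition on this good event throughout.

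Writing $X = \sum_{w \in \mathcal{C}} X_w$, the first moment $\E[X] = q(1+\epsilon)^q =: \mu$ already tends to infinity by $\epsilon \gg q^{-r}$. For the variance, I would group ordered distinct pairs $(w,w')$ by their coordinate-agreement count $k = |\{i : w_i = w'_i\}|$: by linearity of $\mathcal{C}$ the number of such pairs equals $|\mathcal{C}| \cdot B_k$ where $B_k$ counts nonzero codewords of weight $q-k$, and the MDS property gives $B_k \leq \binom{q}{k} q^{d+1-k}$ for $k \leq d = rq$. Plugging in $p := q^{-r}(1+\epsilon)$, Chebyshev reduces matters to controlling
\[
\sum_{k=1}^{rq} \binom{q}{k} \, q^{-k(1-r)} (1+\epsilon)^{-k}.
\]

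The main obstacle I anticipate is that this sum, estimated term-by-term, is as large as $\exp(\Theta(q^r))$, dominated by the enormous number of codeword pairs agreeing on just one or two positions; a naive second-moment argument therefore cannot succeed. To defeat this, the joint randomness of the $A_i$'s---and not merely their sizes---must be used. The route I would follow is a sequential / martingale one. Set $Y_t := |\{f \in \mathcal{P}_d : f(a_i) \in A_i \text{ for all } i \leq t\}|$, so that $Y_0 = q^{d+1}$, $Y_q = X$, and the process $Y_t / p^t$ is a positive martingale with respect to the filtration generated by $A_1, \ldots, A_t$. For $t \leq d+1$ the value $Y_t = q^{d+1-t}\prod_{i \leq t}|A_i|$ is a deterministic function of the sizes and is well-controlled on the good event. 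For $t > d+1$, the one-step conditional variance reduces to bounding $\sum_v n_v^2$, where $n_v := |\{f \in Y_{t-1} : f(a_t) = v\}|$; by a Fourier / sumset estimate applied to the random sets $A_1, \ldots, A_{t-1}$, one hopes to show $n_v = (1 + o(1)) \, Y_{t-1}/q$, so that $\sum_v n_v^2$ is close to its mean-field value $Y_{t-1}^2/q$. Freedman's martingale inequality then yields $\Prob(Y_q = 0) = o(1)$. The hardest step will be establishing this near-uniformity of $n_v$ quantitatively tightly enough to match the threshold interval demanded by the hypothesis on $\epsilon$.
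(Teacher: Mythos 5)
Your diagnosis of why the naive second moment fails is exactly right and matches the paper's own discussion (the low-agreement pairs force $\E[X^2]=\Omega(e^{1/p}\E[X]^2)$). But this same fact defeats the concluding step of your plan. Even granting \emph{perfect} uniformity $n_v=Y_{t-1}/q$, the one-step conditional variance of $M_t:=Y_t/p^t$ is $p(1-p)\sum_v n_v^2/p^{2t}\approx M_{t-1}^2/(pq)$, so the accumulated quadratic variation over the $(1-r)q$ steps beyond $t=d+1$ is $\Theta(M_0^2/p)\gg M_0^2$. Freedman's inequality with deviation $\lambda\approx M_0$ then gives a trivial bound, and this is not an artifact: $M_q$ genuinely is not concentrated around $M_0$, being typically $M_0e^{\pm\Theta(q^{r/2})}$ (the log of the process performs a random walk with per-step scale $1/\sqrt{pq}$). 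What survives is the \emph{multiplicative} version of your idea: each step loses only a relative error $O(\sqrt{\log q/(pq)})$ w.h.p., so $\log Y_q\geq \log\E[X]-O(\sqrt{q^{1+r}\log q})$, and the loss is beaten by the surplus $(1+\epsilon)^q=e^{\Theta(\epsilon q)}$ precisely under the stated hypothesis $\epsilon=\omega(\sqrt{q^{r-1}\log(q^{1-r})})$. So you must run Azuma/Chernoff on $\log Y_t$ (or a per-step bound with a union bound over $t$), not Freedman on $Y_t/p^t$.

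The larger gap is the uniformity claim itself, which you correctly flag as the hardest step but for which no mechanism is supplied. For $t=d+2$ it does follow from Fourier flatness of an $(rq+1)$-fold convolution of independent random sets. But for $t>d+2$ the surviving family $\{f: f(a_i)\in A_i,\ i<t\}$ has no product structure, and proving $\sum_v n_v^2\approx Y_{t-1}^2/q$ is essentially an instance of the original problem for a punctured code with one list forced to be a singleton; nothing in the proposal breaks this circularity. The paper sidesteps the sequential revealing entirely with one global Fourier computation: writing $X$ via Plancherel as $\widehat{f}(0)^{-1}$ times a sum of $\prod_i\widehat{g_i}(\alpha_i)$ over $\alpha\in\mathcal{C}^{\perp}$, it isolates the main term $\prod_i\widehat{g_i}(0)$ and bounds the remainder $R$ by computing $\E[|R|^2]$ exactly (independence of the $A_i$ kills all cross terms $\alpha\neq\beta$) and then invoking the weight distribution $W_{q-i}\leq q^{\dim\mathcal{C}^{\perp}}/i!$ of the dual Reed--Solomon code, finishing with Markov. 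Any completion of your martingale route would need an ingredient of comparable strength about $\mathcal{C}^{\perp}$ at every step $t$, so as written the proposal contains the right intuition about where the difficulty lies but not a proof.
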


There is a technical condition on $\epsilon$ that can be ignored for now, and will be addressed before the proof. The only relevant thing to observe is that when $r$ is bounded away from $0$ and $1$, then $\epsilon = \frac{1}{\operatorname{poly} (q)}$ suffices.  The condition to focus on here is that $q^{-r} \geq \frac{\log q}{q}$. Indeed, one can see that this condition is necessary to ensure that w.h.p, all the $A_i$'s are nonempty. So, for example, if the dimension of $\mathcal{C}$ is $q - 1$, then setting $p = q^{-1}(1 + \epsilon)$ is enough for $\E[X] = \omega(1)$ but this does not translate to there almost surely being a codeword in $A_1 \times \cdots \times A_q$.

Lemma~\ref{lem:smallp} and Theorem~\ref{thm:sharpthreshold} together give us that $\text{AP}_r$ has a sharp threshold at $\max \left\{ q^{-r}, \frac{\log q}{q} \right\}$ whenever $1 - \frac{1}{q} \geq r \gg \frac{1}{\log p}$. For the sake of completeness one could ask if $\text{AP}_r$ has a threshold for all feasible values of $r$, and we show that the answer is yes. More formally,
 
 \begin{theorem}[Sharp threshold for list recovery]
 \label{thm:fullrange}
 For every $r = r(q)$, there is a critical $p = p(r,q)$  such that for every $\epsilon >0$,
 \begin{enumerate}
 \item $\Prob\left(\text{AP}_{r}(p(1 - \epsilon)) = 1\right) = o(1)$.
 \item $\Prob\left(\text{AP}_{r}(p(1 + \epsilon)) = 1\right) = 1 - o(1)$.
 \end{enumerate}
 \end{theorem}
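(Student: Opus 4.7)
The plan is to split into two regimes based on whether $q^{-r}$ exceeds the coupon-collector threshold $\log q/q$. In the main regime $q^{-r} \geq \log q/q$ I will take $p(r,q) = q^{-r}$; the lower direction then follows from Lemma~\ref{lem:smallp} (for any fixed $\epsilon > 0$ we have $\epsilon = \omega(\log q/q)$ trivially, so $\Prob(\text{AP}_r(p(1-\epsilon)) = 1) \leq \exp(-\Omega(\epsilon q)) = o(1)$), and the upper direction is an immediate application of Theorem~\ref{thm:sharpthreshold}, since its technical hypothesis $\epsilon = \omega(\max\{q^{-r}, \sqrt{q^{r-1}\log(q^{1-r})}\})$ reduces to a bound that vanishes as $q \to \infty$ and is therefore satisfied by any fixed positive $\epsilon$.

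The interesting case is $q^{-r} < \log q/q$, equivalently $r > 1 - \log\log q/\log q$. Here the critical probability is driven not by the expected codeword count but by the requirement that each $A_i$ be non-empty, so I will take $p(r,q) = \log q/q$. For the lower direction, I will use that the events $\{A_i = \emptyset\}$ are mutually independent with $\Prob(A_i = \emptyset) \geq \tfrac{1}{2}\,q^{-(1-\epsilon)}$ for large $q$ when $p = p(r,q)(1-\epsilon)$, so the probability that \emph{every} $A_i$ is non-empty is at most $\exp(-\tfrac{1}{2} q^{\epsilon}) = o(1)$. Since any codeword $w$ requires $w_i \in A_i$ for all $i$, an empty $A_i$ forbids $\text{AP}_r = 1$, which gives the claimed bound.

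For the upper direction in this regime I plan to apply Theorem~\ref{thm:sharpthreshold} to a subcode of smaller degree. Specifically, I will choose an integer $d' \leq rq$ with $r' := d'/q \approx 1 - \log_q\log q$, so that $q^{-r'} \approx \log q/q = p(r,q)$. Because $\RS[q, d'] \subseteq \RS[q, rq] = \mathcal{C}$, any codeword of $\RS[q, d']$ found inside $A_1\times\cdots\times A_q$ already certifies $\text{AP}_r = 1$. The technical condition of Theorem~\ref{thm:sharpthreshold} at this $r'$ collapses to $\epsilon = \omega(\sqrt{\log\log q/\log q})$, which any fixed $\epsilon > 0$ satisfies, and so Theorem~\ref{thm:sharpthreshold} yields $\Prob((A_1\times\cdots\times A_q) \cap \RS[q, d'] = \emptyset) = o(1)$.

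The bulk of the technical work is already packaged in Theorem~\ref{thm:sharpthreshold} itself; the only remaining obstacle is the book-keeping in the boundary regime, namely verifying that rounding $r'q$ to an integer does not disturb the asymptotics and that the two pieces of the case split glue together into a coherent choice of $p(r,q)$ for every feasible $r = r(q)$.
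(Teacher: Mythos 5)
Your case split misses the one regime where Theorem~\ref{thm:sharpthreshold} genuinely fails and where the paper has to do essentially all of its work: $r = O\left(\frac{1}{\log q}\right)$, i.e., $q^{-r}$ a \emph{constant} bounded away from $0$ and $1$. This regime sits squarely inside your ``main regime'' $q^{-r}\geq \frac{\log q}{q}$, but your claim that the hypothesis $\epsilon = \omega\left(\max\left\{q^{-r}, \sqrt{q^{r-1}\log(q^{1-r})}\right\}\right)$ ``reduces to a bound that vanishes as $q \to \infty$'' is false there: the first term $q^{-r}$ does not vanish (e.g.\ $r = 1/\log q$ gives $q^{-r} = e^{-1}$), so no fixed $\epsilon>0$ is $\omega(q^{-r})$, and Theorem~\ref{thm:sharpthreshold} gives nothing. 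The paper flags exactly this (``The case that is not handled by Theorem~\ref{thm:sharpthreshold} is when $r = O(1/\log q)$\dots''), and its entire proof of Theorem~\ref{thm:fullrange} is devoted to it. The argument there is different in kind: a second-moment computation via the weight-distribution bound (Proposition~\ref{prop:RSwtdist}) gives $\E[X^2] \leq \E[X] + e^{1/p}\E^2[X]$, so Paley--Zygmund yields only $\Prob(X>0) \geq \frac{1}{2}e^{-1/p}$, a small constant rather than $1-o(1)$; one then uses that $\text{AP}_r$ is monotone and transitive-symmetric (Proposition~\ref{prop:symm}) together with the Friedgut--Kalai sharp-threshold theorem (Theorem~\ref{thm:FK}, Corollary~\ref{corr:FK}) to boost this constant success probability to $1-o(1)$ after increasing $p$ by a factor $1 + O(1/\log q) = 1+o(1)$. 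Some such boosting step is unavoidable in your write-up; without it your proof does not cover, say, $r = 1/\log q$.

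Outside that regime your outline is consistent with the paper: for $r = \omega(1/\log q)$ with $q^{-r} \geq \frac{\log q}{q}$ the result is indeed just Lemma~\ref{lem:smallp} plus Theorem~\ref{thm:sharpthreshold}. Your handling of the boundary $q^{-r} < \frac{\log q}{q}$ (critical probability $\frac{\log q}{q}$ forced by the coupon-collector constraint for the lower direction, and the subcode inclusion $\RS[q,d'] \subseteq \RS[q,rq]$ for the upper direction) is a reasonable way to treat a corner the paper largely sets aside by remark, and the stated checks there look right. But the theorem as stated quantifies over all $r(q)$, and the constant-$p$ case is the substantive one; as written, your proposal has a genuine gap precisely where the paper's proof lives.
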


The case that is not handled by Theorem~\ref{thm:sharpthreshold} is when $r = O\left(\frac{1}{\log q} \right)$ (since in this case, Theorem~\ref{thm:sharpthreshold} requires $\epsilon = \Omega(1)$). This corresponds to the case where $q^{-r}$ is a number bounded away from $0$ and $1$.

\subsection{What doesn't work, and why}

One obvious attempt to prove Theorem~\ref{thm:sharpthreshold} is to consider the second moment of $X (= |\mathcal{C} \cap (A_1\times\cdots\times A_q)|)$ and hope that $\E[X^2] = (1 + o(1))\E^2[X]$. Unfortunately, $\E[X^2]$ is too large. Through a very careful calculation using the weight distribution of Reed Solomon codes which we do not attempt to reproduce here, we have $\E[X^2] = \Omega\left( e^{\frac{1}{p}}\E^2[X]\right)$. So in the regime where, for example, $p = q^{-\Omega(1)}$, this approach is unviable.

To understand this (without the aforementioned involved calculation) in an informal manner, let us fix $p = q^{-r}$ for some fixed constant $r$. Let us identify the tuple of sets $(A_1,\ldots, A_q)$ with the single set $S = \cup_{i \in [q]}\{(a_i,z) \suchthat z \in A_i\}$. So, we are choosing a random subset $S \subset \F_q^2$ of size $\approx q^{2-r}$. On the other hand, the objects we are looking for, i.e., codewords, have size $q$. This is much larger than the standard deviation of $|S|$, which is of the order of $q^{1 - (r/2)}$. Thus, conditioning on the existence of some codeword $w \subset \F_q^2$, the distribution of $S$ changes significantly. One way to see this is the following: Using standard Chernoff bounds, one can check that the size of $S$ is almost surely $q^{2 - r} \pm O\left(q^{1 - (r/2)} \log q \right)$. However, conditioned on $w \in A_1 \times \cdots \times A_q$, the size of $S$ is almost surely $q + q^{-r}(q^2 - q) \pm O\left(q^{1 - (r/2)}\log q\right)$ (the additional $q$ comes from the points that make up $w$). This is much larger than before when $r$ is relatively large. On the other hand, the main point behind (successful) applications of the second moment method is that the distribution does not significantly change after such a conditioning.

One possible way to circumvent the above problem is to pick a uniformly random set $S \subset \F_q^2$ of size $q^{2 - r}$, instead of every point independently with probability $q^{-r}$. This is a closely related distribution, and it is often the case that Theorems in this model are also true in the above `i.i.d.' model. This fact can be also be made formal (see, for example~\cite{JLR00} Corollary $1.16$). Here, when one conditions on the existence of some codeword $w$, at least $|S|$ does not change. Thus the second moment method is not ruled out right at the start. However, it seems to be much more technically involved and it is unclear if it is possible to obtain the relatively small threshold interval that is required for Theorem~\ref{thm:AC0} in this way.

\subsection{What works and how}

Here, we sketch the proofs of the Theorem~\ref{thm:sharpthreshold} and Theorem~\ref{thm:fullrange}, which can be considered the two main technical contributions of this work.

\subsubsection{Proof sketch of Theorem~\ref{thm:sharpthreshold}} 

The key idea in the proof of this theorem is to count the number of polynomials in the `Fourier basis'. Let us consider $f :\F_q^q \rightarrow \{0,1\}$ to be the indicator of $\mathcal{C}$. For $i \in [q]$, let $g_i : \F_q \rightarrow \{0,1\}$ denote the indicator of $A_i$. 

For an extremely brief and informal discussion, what we what we want is essentially $\langle f, \prod_{i \in [q]}g_i \rangle$, which, by Plancharel's identity (see Fact~\ref{fact:Plancherel}) is $\sum_{\alpha}\widehat{f} \cdot  \widehat{\prod_ig_i}(\alpha)$. Since $\mathcal{C}$ is a vector space, we have that $\widehat{f}$ is supported on $\mathcal{C}^{\perp}$. Moreover, $\widehat{g_i}(\alpha_i)$ is much larger when $\alpha_i = 0$ than when $\alpha_i \neq 0$ if $A_i$ is random. This combined with the fact that most points in $\mathcal{C}^{\perp}$ have large weight, and a bit more Fourier analysis means that the inner product, $\langle f,\prod_i g_i\rangle$ is dominated by $\widehat{f}(0) \prod_{i \in [q]}\widehat{g_i}(0)$ which is the expected number of codewords in $A_1 \times \cdots \times A_q$.

Now we give a slightly less informal overview. What we are trying to estimate is exactly

\begin{align*}
X = |\mathcal{C} \cap (A_1\times \cdots \times A_q)| =  \sum_{(x_1,\ldots,x_q) \in \F^q}f(x)\left( \prod_{i \in [q]}g_i(x_i)\right).
\end{align*}

Using Fourier analysis over $\F_q$, one can show that

\begin{align*}
\frac{q^q}{|\mathcal{C}|} \cdot X & = \sum_{(\alpha_1,\ldots,\alpha_q) \in \mathcal{C}^{\perp}}\prod_{i \in [q]}\widehat{g_i}(\alpha_i) \\
& \geq \prod_{i \in [q]}\widehat{g_i}(0) - \left|\sum_{(\alpha_1,\ldots, \alpha_q) \in \mathcal{C}^{\perp}}\left(\prod_{i \in [q]}\widehat{g_i}(\alpha_i)\right)\right|.
\end{align*}

Using the fact that $\mathcal{C}$ is an $\RS[q,rq]$ code, one has (see Fact~\ref{fact:dual}) that $\mathcal{C}^{\perp}$ is an $\RS[q, q - rq - 1]$ code. What will eventually help in the proof is that the weight distribution of Reed Solomon codes (and so in particular, $\mathcal{C}^{\perp}$) is well understood (see Theorem~\ref{thm:MDSwtdist}). 

Now clearly, it suffices to understand the term $\sum_{(\alpha_1,\ldots, \alpha_q) \in \mathcal{C}^{\perp}}\left(\prod_{i \in [q]}\widehat{g_i}(\alpha_i)\right) =: R$. One way to control $|R|$ is to control $|R|^2 = R \overline{R}$. Here, one can use the fact that the $A_i$'s are randomly and independently chosen to establish cancellation in many terms of $\E[|R|^2]$. More formally, one can prove that

\[
\E[|R|^2] = \sum_{(\alpha_1,\ldots,\alpha_q) \in \mathcal{C}^{\perp} \setminus \{\overline{0}\}}\prod_{i \in [q]}\E[|\widehat{g_i}(\alpha_i)|^2].
\]

It is a more or less standard fact that if $A_i$ is a uniformly random set of size $pq =: t$, then
\[
\E[|\widehat{g_i}(0)|^2] \sim \left( \frac{t}{q}\right)^2
\]

and

\[
\E[|\widehat{g_i}(\alpha_i)|^2] \sim \frac{t}{q^2}
\]
 
 for $\alpha_i \neq 0$. This difference, will be the reason why $|R|$ is typically much smaller than $\prod_{i \in [q]}\widehat{g_i}(0)$. To continue, let us believe the heuristic that most polynomials over $\F_q$ of degree $\Theta(q)$ have very few ($o(q)$) zeroes, we can use the rough estimate:
 \begin{align*}
 \E[|R|^2] & \approx |\mathcal{C}^{\perp}|\left( \frac{t}{q^2}\right)^{q - o(q)} \\
 & \approx q^{q - rq}\left(\frac{p}{q}\right)^{q - o(q)} \\
 & \approx q^{-rq}p^{q - o(q)}.
 \end{align*}
 
 And so, Markov's Inequality gives that $|R|$ is unlikely to be much greater than $q^{\frac{rq}{2}}p^{\frac{q}{2} + o(q)}$. On the other hand, with high probability,  
 \begin{align*}
 \prod_{i \in [q]} \widehat{g_i}(0) & \approx \left(\frac{t}{q}\right)^q \\
 & \approx p^{q}.
 \end{align*}
 
Thus if $p \geq q^{-r + o(1)}$, we have that $(q^q / |\mathcal{C}|)\cdot X \geq \prod_{i \in [q]}\widehat{g_i}(0) - |R| > 0$, and so in particular, $X > 0$. The proof of Theorem~\ref{thm:sharpthreshold} is essentially a much tighter, and more formal version of the above argument, and is postponed to Appendix~\ref{sec:sharpthreshold}.

\subsubsection{Proof sketch of Theorem~\ref{thm:fullrange}}

The starting point of Thoerem~\ref{thm:fullrange} is noticing that the only case not covered by Theorem~\ref{thm:sharpthreshold} is $p \in (0,1)$ is some fixed constant, or equivalently $r = O\left( \frac{1}{\log q}\right)$. Here we have a  somewhat crude weight distribution result for Reed Solomon codes (Proposition~\ref{prop:RSwtdist}) to compute the second moment. We first show that $\E[X^2] = O\left( e^{\frac{1}{p}} \E^2[X]\right)$. Using, for example the Paley-Zygmund Inequality~(\ref{eqn:PaleyZygmund}), this means that $\Prob(X > 0) \geq \Omega(e^{-\frac{1}{p}})$. Thus we have that $\{X > 0\}$ with at least some (possibly small) constant probability. But what we need is that $\Prob(X > 0) \geq 0.99$. For this, we now use the fact that $\text{AP}_r$ is monotone, and transitive-symmetric, which informally means that any two variables of $\text{AP}_r$ look the same (see Definition~\ref{defn:transitive} for a formal definition). Standard applications of hypercontractivity for the Boolean hypercube (see Theorem~\ref{thm:FK}) gives that for $p' = p + O\left( \frac{1}{\log q}\right)$, we have that $\Prob(\text{AP}_r(p') = 1) \geq 0.99$.

The details of this proof are postponed to Appendix~\ref{sec:fullrange}.

One thing to note is that our definition of sharp threshold only makes sense when the critical probability $p_r$ is bounded away from $1$ (since otherwise trivially there is some function $\epsilon = \epsilon(q) = o(1)$ such that $p\cdot (1 + \epsilon) = 1$). So, we will restrict ourselves to the regime where $r = \Omega\left( \frac{1}{\log q}\right)$. Also, it is to be understood that all the statements above (and below) only make sense when $rq$ is an integer, and thus we shall restrict ourselves to this case.

Finally, we address the question of random list recovery with errors as another application of Theorem~\ref{thm:sharpthreshold}.

\subsection{Random list recovery with errors}

Given a random subset of points in $S \subseteq \F_q^2$, what is the largest fraction of any degree $d = \Theta(q)$ polynomial that is contained in this set? Using the Union Bound, it is easy to see that no polynomial of degree $d$ has more than $d \log_{\frac{1}{p}}q + o(q)$ points contained in $S$ (formal details are given in Section~\ref{sec:expagreement}). We show that perhaps unsurprisingly, this is the truth. Formally,

\begin{corollary}
\label{corr:errors}
Let $S$ be a randomly chosen subset of $\F_q^2$ where each point is picked independently with probability $p$. Then with probability $1 - o(1)$, 
\[
 \max_{w \in \RS[q,d]}|w \cap S| = d \log_{\frac{1}{p}} q  - O \left( \frac{q}{\log \left( \frac{1}{p}\right)}\right).
 \]
\end{corollary}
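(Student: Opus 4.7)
The plan is to prove matching upper and lower bounds on $M := \max_{w \in \RS[q,d]} |w \cap S|$. The upper bound is the easy direction via a union bound, as the excerpt already hints; the lower bound is the substantive part and proceeds by applying Theorem~\ref{thm:sharpthreshold} to a restriction of the Reed--Solomon code to a subset of coordinates.

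For the upper bound, set $k_+ := d\log_{1/p} q + Cq/\log(1/p)$. Each codeword $w$ satisfies $|w \cap S| \sim \Bin(q,p)$ marginally, so
\[
\Prob[M \geq k_+] \;\leq\; q^{d+1}\binom{q}{k_+}p^{k_+}.
\]
The choice of $k_+$ ensures $k_+\log(1/p) - d\log q = Cq$, so the logarithm of the right-hand side equals $\log q + \log\binom{q}{k_+} - Cq = O(q) - Cq$ (using $\log\binom{q}{k_+} \leq q \log 2$). Taking $C$ a sufficiently large constant gives $\Prob[M \geq k_+] = e^{-\Omega(q)}$.

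For the lower bound, set $k_- := d\log_{1/p} q - Cq/\log(1/p)$ and fix any $I \subseteq [q]$ of size $k_-$. Consider the projected code
\[
\mathcal{C}_I := \{(f(a_i))_{i \in I} : f \in \mathcal{P}_d\} \subseteq \F_q^{k_-},
\]
a $[k_-, d+1]_q$ MDS code whose every codeword lifts (uniquely, since $k_- > d$) to a codeword of $\RS[q,d]$. A short log computation shows that $p$ exceeds the list-recovery threshold $q^{-(d+1)/k_-}$ of $\mathcal{C}_I$ by a multiplicative factor $1 + \epsilon$, where $\epsilon = \epsilon(C) > 0$ is a constant that can be made arbitrarily large by choice of $C$. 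Applying the natural generalization of Theorem~\ref{thm:sharpthreshold} from $\RS[q, rq]$ to arbitrary $[k, d+1]_q$ MDS codes then yields, w.h.p., a codeword $w \in \RS[q,d]$ with $w_i \in A_i$ for every $i \in I$, and hence $M \geq k_-$.

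The main obstacle is justifying this generalization of Theorem~\ref{thm:sharpthreshold}. Reassuringly, every ingredient of the proof sketched in the excerpt --- Fourier analysis over $\F_q$, the estimate of $\E[|R|^2]$ via independence of the $A_i$, and the MDS weight distribution (Theorem~\ref{thm:MDSwtdist}) --- depends only on $\mathcal{C}^\perp$ being MDS of the stated dimension, not on the block length matching the field size. The argument therefore transfers verbatim to $\mathcal{C}_I$ with the block length relabelled from $q$ to $k_-$; the technical smallness condition on $\epsilon$ in Theorem~\ref{thm:sharpthreshold} becomes an $o(1)$ requirement, trivially satisfied by our constant $\epsilon$. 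An alternative route is a direct Paley--Zygmund argument on the number of $\RS[q,d]$-codewords with at least $k_-$ hits, grouping pairs by their agreement count via the MDS weight distribution; this avoids stating a generalized theorem but essentially reproves it.
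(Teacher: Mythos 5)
Your proposal is correct and follows essentially the same route as the paper: the upper bound is the same union bound over codewords and coordinate subsets, and the lower bound applies Theorem~\ref{thm:sharpthreshold} generalized to punctured Reed--Solomon (MDS) codes on a coordinate set of size $k_-$, which is exactly the paper's Corollary~\ref{corr:punctured}. The justification you give for the generalization (that the proof only uses the MDS weight distribution of the dual and independence of the $A_i$) is the same observation the paper makes when stating its punctured-code version of the theorem in Section~\ref{sec:expagreement}.
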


 We restrict our attention to the case when $d = \Theta(q)$, where the above statement is nontrivial. This is the content of Section~\ref{sec:expagreement}. However, we believe that the statement should hold for all rates, and error (in general) better than $O\left( \frac{q}{\log q}\right)$.
 
 We make two final comments before proceeding to the proofs: (1) In Theorem~\ref{thm:sharpthreshold}, each $A_i$ is chosen by including each point independently. However, the same proof works if $A_i$ is a uniformly random set with a prescribed size. (2) Although we only state the lower bound for $\text{AC}^0[\oplus]$, one can check that all the tools (and, therefore, the lower bound) still work when we replace the $\oplus$ gates with any $\oplus_p$ ($\text{MOD}_p$) for any small prime $p$.

\section{$\text{AC}^0[\oplus]$ lower bound for $\text{AP}_r$}
\label{sec:AC0}

We prove the lower bound by showing that $\text{AP}_r$ solves a biased version of the \emph{Coin Problem}, and use the lower bounds known for such kinds of functions, obtained by~\cite{LSSTV19},~\cite{CHLT19}. 

\begin{definition}[$(p,\epsilon)$-coin problem]
We say that a circuit $C = C^{n}$ on $n$ inputs solves the $(p,\epsilon)$-coin problem if
\begin{itemize}
\item For $X_1,\ldots, X_n \sim \operatorname{Ber}(p(1 - \epsilon))$,
\[\Prob(\text{C}(X_1,\ldots,X_n) = 0) \geq 0.99 \]
\item For $X_1,\ldots, X_n \sim \operatorname{Ber}(p(1 + \epsilon))$,
\[\Prob(\text{C}(X_1,\ldots,X_n) = 1) \geq 0.99 \]
\end{itemize}
\end{definition}

We shall abbreviate the $(p,\epsilon)$-coin problem on $n$ variables as $\text{CP}^n(p,\epsilon)$. We observe that a function $f:\{0,1\}^n \rightarrow \{0,1\}$ \emph{solves} $\text{CP}^n\left(p,\epsilon \right)$ if it has a sharp threshold at $p$ with threshold interval at most $\epsilon$. The one obstacle we have to overcome in using Theorem~\ref{thm:LSSTV} is that $\text{AP}_r$ has a sharp threshold at $p^{-c} \ll \frac{1}{2}$. However, we will show how to simulate biased Bernoulli r.v's from almost unbiased ones. Let $\text{C}(s,d)$ to denote the class of functions on $n$ variables which have circuits of size $O(s) = O(s(n))$ and depth $d = d(n)$ using $\land$, $\lor$, $\lnot$, and $\oplus$ gates. Here, we make the following simple observation about the power of $\text{AC}^0[\oplus]$ circuits to solve biased and unbiased $\epsilon$-coin problem. First, we observe that it is possible to simulate a biased coin using an unbiased one. 

\begin{lemma}
\label{lem:bias}
Let $s$ be such that $\frac{1}{2^s} \leq p \in (0,1)$, and $\epsilon \leq \frac{1}{s^K}$ for a large constant $K$. Then, there is a CNF $F_p$ on $t \leq s^2$-variables such that for inputs $X_1 \ldots, X_t \in \operatorname{Ber}\left(\frac{1}{2} + \epsilon \right)$,
\[
\Prob\left(F_p(X_1,\ldots, X_t) = 1\right) = p(1 + \Omega(\epsilon L))
\]
and for inputs $X_1 \ldots, X_t \in \operatorname{Ber}\left(\frac{1}{2} - \epsilon \right)$,
\[
\Prob\left(F_p(X_1,\ldots, X_t) = 1\right) = p\left(1 + \frac{1}{2^{\Omega(\sqrt{t})}} - \Omega(\epsilon L)\right)
\]
where $L = \lfloor\log_2(1/p)\rfloor$.
\end{lemma}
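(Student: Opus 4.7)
The plan is to take $F_p$ to be the CNF that tests whether a binary number is at least a given threshold. Fix the precision $s' := s^2$ and let $c := \lfloor (1-p)\cdot 2^{s'}\rfloor / 2^{s'}$ be the truncation of $1-p$ to $s'$ bits. Define $F_p(X_1, \ldots, X_{s'})$ to accept iff the $s'$-bit number $0.X_1 X_2 \cdots X_{s'}$ is at least $c$. This event is monotone in $X$, and negating the natural DNF for the complementary event $\{0.X_1\cdots X_{s'} < c\}$ (with one conjunct $\bigwedge_{j<i}[X_j = c_j] \wedge \neg X_i$ per position $i$ with $c_i = 1$) yields an explicit CNF on $t = s' \le s^2$ variables with at most $s'$ clauses of width at most $s'$.

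I will analyze $\Prob[F_p = 1]$ by Taylor-expanding in the bias. On unbiased inputs one has $\Prob[F_p = 1] = \lceil p \cdot 2^{s'}\rceil / 2^{s'} \in [p, p + 2^{-s'}]$. Differentiating the likelihood $(1/2+\epsilon)^{|x|}(1/2-\epsilon)^{s'-|x|}$ in $\epsilon$ at $0$ gives
\[
\left.\frac{d}{d\epsilon}\Prob[F_p = 1]\right|_{\epsilon=0} = 2\,\E_0\bigl[F_p(X)(2|X| - s')\bigr] = 4p\bigl(\E_0[|X| \mid F_p = 1] - s'/2\bigr).
\]

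The main step is to lower bound this derivative by $\Omega(pL)$. From $p \le 2^{-L}$ and $s' \ge L$ we get $c \ge 1 - 2 \cdot 2^{-L}$, so the binary expansion of $c$ begins with at least $L' := \max(1, L-1) \ge L/2$ consecutive $1$s; hence any $X$ with $F_p(X) = 1$ is forced to have $X_1 = \cdots = X_{L'} = 1$. For the remaining coordinates $i > L'$, monotonicity of $F_p$ (via a Harris-type coupling) gives $\Prob[X_i = 1 \mid F_p = 1] \ge 1/2$. Summing, $\E_0[|X| \mid F_p = 1] \ge L' + (s'-L')/2 = (s'+L')/2$, which makes the derivative at least $2pL' \ge pL$.

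Finally, I control the Taylor remainder. The log-likelihood has $|\partial_\epsilon \log \Prob_\epsilon[X=x]| = O(s')$ and $|\partial_\epsilon^2 \log \Prob_\epsilon[X=x]| = O(s')$ uniformly on $|\epsilon| \le 1/s^K$, so $|\partial_\epsilon^2 \Prob[F_p = 1]| = O((s')^2 p) = O(s^4 p)$, and the Taylor remainder is $O(\epsilon^2 s^4 p) = o(\epsilon p L)$ once $K$ is a sufficiently large constant. Combining, $\Prob[F_p = 1 \mid +\epsilon] \ge p(1 + \Omega(\epsilon L))$ and $\Prob[F_p = 1 \mid -\epsilon] \le (p + 2^{-s'}) - \Omega(\epsilon p L)$; the additive truncation $2^{-s'}$ is absorbed into $p/2^{\Omega(\sqrt{t})}$ since $\sqrt{t} = s$ and $2^{-s'}/p \le 2^{s-s^2} = 2^{-\Omega(s)}$. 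The main obstacle is the first-order lower bound; the combination of forced leading bits and a Harris-type inequality on the remaining coordinates is what makes it clean uniformly in $L$, including the small-$L$ case.
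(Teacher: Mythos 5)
Your proposal is correct, but it proves the lemma by a genuinely different construction and analysis than the paper. The paper builds $F_p$ as a variable-disjoint conjunction of OR-clauses: it greedily chooses multiplicities $k_j$ so that $\prod_j(1-2^{-j})^{k_j}\approx p$ (with $k_1=\lfloor\log_2(1/p)\rfloor$ and $k_j\le 3$ for $j\ge 2$), and then the acceptance probability under bias factors exactly as a product over clauses, so the shift $1\pm\Omega(\epsilon k_1)$ follows from elementary per-clause inequalities with no remainder to control. You instead take a comparator CNF testing $0.X_1\cdots X_{s'}\ge c$ with $c$ a truncation of $1-p$, and extract the bias sensitivity from the first-order Taylor term $2\,\E_0[F_p(X)(2|X|-s')]$; the key point, which you handle correctly, is that $c\ge 1-2^{-L}-2^{-s'}$ forces the first $\approx L$ accepted bits to be $1$ while Harris/FKG gives $\Prob[X_i=1\mid F_p=1]\ge 1/2$ on the rest, yielding the $\Omega(pL)$ derivative; the truncation error $2^{-s'}/p\le 2^{s-s^2}$ lands exactly in the $2^{-\Omega(\sqrt t)}$ slack the statement allows. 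Your route costs you a second-order remainder bound $O(\epsilon^2 s^4 p)$, absorbed by taking $K$ large (which the hypothesis permits), whereas the paper's disjoint-clause structure avoids Taylor expansion entirely; on the other hand your single uniform construction and the derivative-plus-correlation-inequality viewpoint is arguably cleaner and makes the role of $L$ transparent. Two cosmetic remarks: your CNF uses negated literals (harmless, since negations at the inputs add no depth in the downstream application), and, like the paper, you establish the two displayed equalities as a lower bound (for $+\epsilon$) and an upper bound (for $-\epsilon$), which is what the application actually uses.
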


The idea is essentially that the $\texttt{AND}$ of $k$ unbiased coin is a $2^{-k}$-biased coin. However, some extra work has to be done if we want other biases (say, $(0.15) \cdot 2^k$). The proof of this lemma is postponed to the Appendix~\ref{sec:asec4}. This lemma now gives us the following:

\begin{lemma}
\label{lem:biasedcoin}
Let $z \in (0,1)$ be a fixed constant. If $\text{CP}^n\left(\frac{1}{n^{z}},o(\epsilon \log n)\right) \in \text{C}^n(s,h)$, then there is a $t \leq \log^2n$ such that $\text{CP}^{nt}\left(\frac{1}{2}, \epsilon \right) \in \text{C}^{nt}(z s\log n, h+2)$.
\end{lemma}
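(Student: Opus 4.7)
The natural approach is to compose the gadget from Lemma~\ref{lem:bias} with the assumed biased-coin solver. Set $p = 1/n^z$ and invoke Lemma~\ref{lem:bias} with its internal parameter chosen so that $t \leq \log^2 n$ and $L = \lfloor \log_2(1/p)\rfloor = \Theta(\log n)$; this yields a CNF $F_p$ on $t$ variables. Given $X_1, \ldots, X_{nt}$, partition the bits into $n$ disjoint blocks of size $t$, compute $Y_i := F_p(\text{block } i)$ in parallel (depth $2$, since $F_p$ is a CNF), and feed $(Y_1, \ldots, Y_n)$ into the assumed circuit $C \in \text{C}^n(s, h)$ for $\text{CP}^n(1/n^z, o(\epsilon \log n))$. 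The total depth is $h+2$, and the total size is at most $s + n\cdot|F_p| = O(zs \log n)$ under the (application-relevant) assumption that $s$ dominates the polylogarithmic gadget overhead from the $n$ copies of $F_p$.

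\textbf{Correctness.} The outputs $Y_1, \ldots, Y_n$ are mutually independent (disjoint blocks) Bernoulli variables. By Lemma~\ref{lem:bias}: if the $X_j$ are i.i.d.\ $\Ber(\tfrac12 + \epsilon)$, each $Y_i = 1$ with probability $p(1 + \Omega(\epsilon L))$; if the $X_j$ are i.i.d.\ $\Ber(\tfrac12 - \epsilon)$, each $Y_i = 1$ with probability $p(1 + 2^{-\Omega(\sqrt t)} - \Omega(\epsilon L))$. Choosing $t \asymp \log^2 n$ makes the spurious $2^{-\Omega(\sqrt t)} = n^{-\Omega(1)}$ term dominated by $\epsilon L = \Omega(\epsilon \log n)$, provided $\epsilon$ is not already negligibly small (the only regime of interest for the lower-bound application). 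Letting $\tilde p$ be the midpoint of the two Bernoulli parameters and $\tilde\delta$ the normalized half-gap, one reads off $\tilde p = p(1 + o(1))$ and $\tilde\delta = \Omega(\epsilon \log n)$, so the $Y_i$'s are distributed exactly as $\Ber(\tilde p(1 \pm \tilde\delta))$. The hypothesis that $C$ solves the coin problem with gap $o(\epsilon \log n)$ around the center $1/n^z$ therefore extends to this perturbed setup, and $C(Y_1, \ldots, Y_n)$ correctly distinguishes the two cases with probability at least $0.99$.

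\textbf{Main obstacle.} The only genuinely subtle point is the asymmetry in the two Bernoulli parameters produced by $F_p$: the extra $2^{-\Omega(\sqrt t)}$ term on the low side shifts the output mean away from the nominal $p$, so we must justify that $C$ still works at the slightly off-center point $\tilde p$. Controlling this shift is exactly what pins down $t \asymp \log^2 n$ --- any smaller $t$ would let the shift swamp the useful gap $\epsilon \log n$, while the bound $t \leq \log^2 n$ is just tight enough to fit the $n$ gadget copies inside the stated size budget. The real content is already in Lemma~\ref{lem:bias}; once that is in place, the reduction above is essentially bookkeeping together with the observation that the coin problem is robust to $o(1)$ multiplicative perturbations of its center.
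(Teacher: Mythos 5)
Your proposal is correct and follows essentially the same route as the paper: replace each input of the assumed circuit by an independent copy of the CNF $F_{1/n^z}$ from Lemma~\ref{lem:bias} on a fresh block of $t$ bits, which adds $2$ to the depth and a factor of roughly the gadget size to the circuit size. If anything, you are more careful than the paper's one-paragraph argument, since you explicitly track the asymmetric $2^{-\Omega(\sqrt{t})}$ shift on the low-bias side and justify that the solver still works at the slightly perturbed center, a point the paper glosses over with ``a similar computation.''
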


\begin{proof}
Let $C$ be a circuit for $\text{CP}^n\left(\frac{1}{n^{z}},\delta\right)$-coin problem. Replace each input variable with the CNF $F_{\left(\frac{1}{n^{z}}\right)}$ from Lemma~\ref{lem:bias} on $t = O(\log^4 n) $ independent variables. Call this circuit $\text{C}'$, on $tn$ variables. If the bias of each of these input variables is $\frac{1}{2} + \epsilon$, then the guarantee of Lemma~\ref{lem:bias} is that output of the and gate is $1$ with probability at least $\frac{1}{n^{z}}(1 + \Omega(\epsilon \log n))$. A similar computation gives that if the bias of the inputs are $\left( \frac{1}{2} - \epsilon \right)$, then the bias of the output is at most $\frac{1}{n^{z}}(1 -  \Omega(\epsilon \log n))$. Therefore, $\text{C}'$ solves $\text{CP}^{nt}\left(\frac{1}{2}, \epsilon\right)$, and has size at most $s \log n$, and depth $h+2$.

\end{proof}

Theorem~\ref{thm:sharpthreshold} and Lemma~\ref{lem:smallp}, together, now give us the following corollary:

\begin{corollary}
Let $q$ be a large enough prime power. Then $AP_r$ on $q^2$ inputs solves the $\left(q^{-r}, \epsilon\right)$ coin problem, for $\epsilon = \omega\left(\max\left\{q^{-r}, q^{\frac{r - 1}{3}}\right\}\right)$
\end{corollary}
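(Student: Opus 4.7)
The plan is to read this corollary as a pure bookkeeping consequence of the two ingredients already proved: the easy first-moment bound (Lemma~\ref{lem:smallp}) gives one half of the coin-problem guarantee, and the Fourier-analytic Theorem~\ref{thm:sharpthreshold} gives the other. All that is left is to verify that the corollary's hypothesis on $\epsilon$ implies the hypotheses of both statements. Set $p := q^{-r}$. By the definition of $\text{CP}^n(p,\epsilon)$, I need to establish
\[
\Prob\bigl(\text{AP}_r(p(1-\epsilon)) = 1\bigr) \leq 0.01 \quad\text{and}\quad \Prob\bigl(\text{AP}_r(p(1+\epsilon)) = 1\bigr) \geq 0.99
\]
for all sufficiently large prime powers $q$.

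For the first inequality I would invoke Lemma~\ref{lem:smallp}, whose only hypothesis is $\epsilon = \omega(\log q / q)$. Under the corollary's hypothesis we have $\epsilon = \omega(q^{(r-1)/3})$; since $r$ is bounded away from $1$, this is already much larger than $(\log q)/q$. Lemma~\ref{lem:smallp} then delivers
\[
\Prob\bigl(\text{AP}_r(p(1-\epsilon)) = 1\bigr) \leq \exp\bigl(-\Omega(\epsilon q)\bigr) = o(1),
\]
which is $\leq 0.01$ for $q$ large. This is immediate and presents no obstacle.

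For the second inequality I would invoke Theorem~\ref{thm:sharpthreshold}, whose hypotheses are $q^{-r} \geq (\log q)/q$ and $\epsilon = \omega\bigl(\max\{q^{-r},\sqrt{q^{r-1}\log(q^{1-r})}\}\bigr)$. The first hypothesis is free from the assumption $r \in (0,1)$ (bounded away from $1$). The $q^{-r}$ branch of the max is literally present in the corollary's hypothesis, so the only nontrivial step is to check that $\epsilon = \omega(q^{(r-1)/3})$ dominates $\sqrt{q^{r-1}\log(q^{1-r})} = q^{(r-1)/2}\sqrt{(1-r)\log q}$. Computing the ratio,
\[
\frac{q^{(r-1)/3}}{q^{(r-1)/2}\sqrt{(1-r)\log q}} \;=\; \frac{q^{(1-r)/6}}{\sqrt{(1-r)\log q}} \;\longrightarrow\; \infty
\]
as $q \to \infty$, since $(1-r)/6$ is a positive constant and $q^{(1-r)/6}$ grows polynomially in $q$ while $\sqrt{\log q}$ grows only polylogarithmically. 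Hence the hypothesis of Theorem~\ref{thm:sharpthreshold} is satisfied, and it gives $\Prob((A_1\times\cdots\times A_q) \cap \RS[q,rq] = \emptyset) = o(1)$, which translates to $\Prob(\text{AP}_r(p(1+\epsilon)) = 1) = 1 - o(1) \geq 0.99$ for $q$ large.

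There is no real obstacle here; the mild bookkeeping point is just that the exponent $1/3$ in $q^{(r-1)/3}$ (rather than $1/2$) is precisely chosen to absorb the extra $\sqrt{\log q}$ factor coming from Theorem~\ref{thm:sharpthreshold}. Combining the two verifications yields the claim; the corollary is essentially a restatement of Lemma~\ref{lem:smallp} together with Theorem~\ref{thm:sharpthreshold} in the language of the $(p,\epsilon)$-coin problem, which is the form needed before combining with Lemma~\ref{lem:biasedcoin} and Theorem~\ref{thm:LSSTV} to obtain the main $\text{AC}^0[\oplus]$ lower bound.
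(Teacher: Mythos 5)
Your proposal is correct and matches the paper's intent exactly: the paper states this corollary as an immediate consequence of Lemma~\ref{lem:smallp} and Theorem~\ref{thm:sharpthreshold} without writing out the verification, and your argument supplies precisely that verification, including the correct observation that the exponent $1/3$ in $q^{(r-1)/3}$ is there to absorb the $\sqrt{\log q}$ factor in the theorem's hypothesis.
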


As a result, Theorem~\ref{thm:LSSTV}, and Lemma~\ref{lem:bias}, and Lemma~\ref{lem:biasedcoin} together, give us the following bounded depth circuit lower bound for $\text{AP}_r$:

\begin{theorem*}[\ref{thm:AC0}, Restated]
For any $r \in (0,1)$, and $h \in \N$, we have that 
\[
\text{AP}_{r} \not \in  \text{C}\left(\exp\left\{\tilde{\Omega}\left(hq^{\frac{r^2}{h-1}} \right) \right\},h\right).
\]
\end{theorem*}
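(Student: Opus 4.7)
The plan is a three-step reduction, executed contrapositively: if $\text{AP}_r$ admits a circuit in $\text{C}(s,h)$, then we chain the corollary preceding the theorem (sharp threshold for $\text{AP}_r$), Lemma~\ref{lem:biasedcoin} (biased-to-unbiased conversion of coins), and Theorem~\ref{thm:LSSTV} (the $\text{AC}^0[\oplus]$ lower bound for the unbiased coin problem) to force $s$ to be large. No new technique is needed beyond careful parameter tracking, since the heavy lifting (the sharp-threshold bound of Theorem~\ref{thm:sharpthreshold} and the biased-to-unbiased simulation of Lemma~\ref{lem:bias}) has already been carried out.

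Concretely, suppose $\text{AP}_r \in \text{C}(s,h)$. By the displayed corollary right before the theorem, this same circuit solves $\text{CP}^{q^2}(q^{-r},\delta)$ for every $\delta = \omega(\max\{q^{-r}, q^{(r-1)/3}\})$; I would fix $\delta$ essentially at this envelope, namely $\delta = q^{-r}\cdot\operatorname{polylog}(q)$ in the regime where $q^{-r}$ dominates. Next, I would apply Lemma~\ref{lem:biasedcoin} with $n = q^2$ and $z = r/2$ (so that $n^{-z} = q^{-r}$). That lemma promotes the biased-coin circuit to a depth-$(h+2)$, size-$O(rs\log q)$ circuit on $n t \le q^2 \log^4 q$ inputs solving $\text{CP}^{nt}(1/2,\epsilon)$ with $\epsilon = \Theta(\delta/\log q)$. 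Finally, I would invoke Theorem~\ref{thm:LSSTV} on this new circuit: its size must be at least $\exp(\Omega((h+2)(1/\epsilon)^{1/(h+1)}))$. Solving this inequality for $s$ (which picks up only a $\log q$ factor on the left) and plugging in the chosen $\delta$ yields the claimed lower bound $\exp(\tilde{\Omega}(h q^{r^2/(h-1)}))$ after absorbing polylogarithmic factors into $\tilde{\Omega}$.

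The main obstacle is purely bookkeeping: selecting $\delta$ as small as the sharp-threshold corollary allows, tracking how the $1/\log q$ loss in the biased-to-unbiased reduction and the two extra depth levels incurred by Lemma~\ref{lem:biasedcoin} interact inside the exponent of Theorem~\ref{thm:LSSTV}. The subsidiary point to verify is that the hypotheses of Lemma~\ref{lem:biasedcoin} and Theorem~\ref{thm:LSSTV} are actually met in our setting: the target bias $q^{-r} = (q^2)^{-r/2}$ has the required form $n^{-z}$ with constant $z \in (0,1)$, and the threshold interval $\epsilon$ is polynomially small in $\log n$, so the CNF simulation can be applied without incurring additional loss beyond the $\log n$ factor already accounted for. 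Once these are in place, the theorem follows by direct rearrangement, and one then restates it in the form given.
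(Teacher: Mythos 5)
Your proposal follows exactly the paper's route: the paper obtains this theorem precisely by chaining the corollary that $\text{AP}_r$ solves the $(q^{-r},\epsilon)$-coin problem (via Lemma~\ref{lem:smallp} and Theorem~\ref{thm:sharpthreshold}) with Lemma~\ref{lem:bias}, Lemma~\ref{lem:biasedcoin}, and Theorem~\ref{thm:LSSTV}, just as you describe. The paper itself gives no more detail on the final parameter bookkeeping than you do, so your account matches the source's proof in both structure and level of detail.
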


\section{Random list recovery with errors}
\label{sec:expagreement}

In this section, we shall again consider Reed-Solomon codes $\RS[q,rq]$ where $r$ is some constant between $0$ and $1$. Let us slightly abuse notation, as before, and think of a codeword $w \in \RS[q,rq]$ corresponding to a polynomial $p(X)$ as the set of all the zeroes of the polynomial $Y = p(X)$. That is, for a codeword $w = (w_1,\ldots, w_q)$ associated with polynomial $p$, we think of $w$ as a subset $\{(a_i,p(a_i)) \suchthat i \in [q]\}$ (recall that $\F = \{a_1,\ldots, a_q\}$). For a set of points $S \subset \F_q^2$ and a codeword $w$ we say the \emph{agreement between $w$ and $S$} to denote the quantity $|w \cap S|$. For a code $\mathcal{C}$, we say that the \emph{agreement between $\mathcal{C}$ and S} to denote $\max_{w \in \mathcal{C}}|m \cap S|$.

We are interested in the following question: For a set $S \subset \F_q^2$. What is the smallest $\ell$ such that there exists a $w \in \RS[q,rq]$ such that $|w \cap S| \geq q - \ell$? In other words, what is the largest agreement between $\RS[q,rq]$ and $S$? This is (very close to) the list recovery problem for codes \emph{with errors}. Naturally, we seek to answer this question when $S$ is chosen randomly in an i.i.d. fashion with probability $p$. Theorem~\ref{thm:sharpthreshold} gives asymptotically tight bounds in a relatively straightforward way for constant error rate. 

One can observe that the only properties about Reed-Solomon codes that was used in Theorem~\ref{thm:sharpthreshold} was the weight distribution in the dual space of codewords. However, (see Appendix~\ref{sec:RS}) these are also true for punctured Reed-Solomon codes codes. So, an analogus theorem also holds for punctured Reed Solomon codes. Formally,

\begin{theorem}
Let $q,n,d$ be integers such that $q$ is a prime power and $n = \omega(\log q)$, and $q^{-\frac{d}{n}} \geq \frac{\log n}{q}$ and let $\epsilon = \omega \left(\max\left\{q^{- \frac{d}{n}} \sqrt{q^{1 - \frac{d}{n}}\log \left(q^{1 - \frac{d}{n}}\right)}\right\}\right)$. Let $\mathcal{C}$ be an $\RS[q,d]|_n$ code.

Let $A_1,\ldots,A_n$ be independently chosen random subsets of $\F_q$ with each point picked independently with probability $q^{-\frac{d}{n}}(1 + \epsilon)$. Then 
\[
\Prob((A_1 \times \cdots \times A_n) \cap \mathcal{C} = \emptyset) = o(1).
\]
\end{theorem}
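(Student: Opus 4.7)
The plan is to run the proof of Theorem~\ref{thm:sharpthreshold} essentially verbatim, with the block length $q$ replaced throughout by the punctured length $n$, since the only place the specific structure of $\RS[q,d]$ entered that argument was through the weight distribution of the dual code, and this carries over to the punctured setting. Specifically, the dual of $\RS[q,d]|_n$ is a generalized Reed--Solomon code of length $n$ and dimension $n-d-1$; in particular it is MDS with minimum distance $d+2$, and its weight enumerator is given by the MDS weight distribution formula with parameters $(n, n-d-1)$ substituted for $(q, q-d-1)$.

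With that in hand, I would mirror the Fourier sketch preceding Theorem~\ref{thm:sharpthreshold}. Let $f$ be the indicator of $\mathcal{C}$ and $g_i$ be the indicator of $A_i$, and write
\[
X \;=\; |\mathcal{C}\cap (A_1\times\cdots\times A_n)| \;=\; \sum_{x\in\F_q^n} f(x)\prod_{i=1}^n g_i(x_i).
\]
Plancherel on $\F_q^n$, together with the fact that $\widehat{f}$ is constant on $\mathcal{C}^\perp$ and zero elsewhere, yields the punctured analogue of the key identity
\[
\frac{q^n}{|\mathcal{C}|}\,X \;=\; \prod_{i=1}^n \widehat{g_i}(0) \;+\; R, \qquad R := \sum_{\alpha\in\mathcal{C}^\perp\setminus\{0\}}\prod_{i=1}^n \widehat{g_i}(\alpha_i).
\]
It therefore suffices to show that the main term $\prod_i \widehat{g_i}(0) = \prod_i |A_i|/q$ concentrates around $p^n$ (where $p = q^{-d/n}(1+\epsilon)$) and that $|R|$ is strictly smaller than it.

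For the bound on $|R|$ I would compute $\E[|R|^2]$. Independence of the $g_i$ together with character orthogonality on $\F_q$ forces $\E[\widehat{g_i}(\alpha_i)\overline{\widehat{g_i}(\beta_i)}]$ to vanish unless $\alpha_i=\beta_i$, so only the diagonal $\alpha = \beta\in\mathcal{C}^\perp\setminus\{0\}$ contributes. Using the standard estimates $\E[|\widehat{g_i}(\alpha)|^2]\asymp p/q$ for $\alpha\neq 0$ and $\E[\widehat{g_i}(0)^2] = p^2(1+O(1/pq))$, the second moment becomes a sum over the weight distribution of $\mathcal{C}^\perp$ starting at weight $d+2$, and the MDS bound $A_w \leq \binom{n}{w}q^{w-d-1}$ produces an estimate of the shape $\E[|R|^2]\lesssim q^{-(d+1)}p^n$ up to lower-order factors. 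Markov's inequality then bounds $|R|$ by $q^{-(d+1)/2}p^{n/2}\cdot\omega(1)$ with high probability, and this is strictly smaller than $p^n$ precisely when $|\mathcal{C}|p^n = q(1+\epsilon)^n$ tends to infinity fast enough to swallow the Markov slack.

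The main obstacle will be to track the error terms carefully enough that the stated hypotheses on $\epsilon$ actually suffice. The condition $q^{-d/n}\geq \log n/q$, i.e.\ $pq\geq \log n$, plays the same role as the analogous condition in Theorem~\ref{thm:sharpthreshold}: a Chernoff bound for each $|A_i|$ combined with a union bound over the $n$ coordinates guarantees $\prod_i|A_i|/q = p^n(1+o(1))$ with high probability and, in particular, that every $A_i$ is nonempty. The two pieces of the lower bound on $\epsilon$ correspond, respectively, to controlling the $1+O(1/pq)$ correction in $\E[\widehat{g_i}(0)^2]$ and to balancing the Markov bound on $|R|$ against the main term; each is the direct analogue, with $n$ in place of $q$, of the bookkeeping carried out in the original proof, so once those inequalities are tracked no further technical ingredients are needed.
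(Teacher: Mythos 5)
Your proposal is correct and follows essentially the same route as the paper, which itself proves this theorem by observing that the argument for Theorem~\ref{thm:sharpthreshold} uses only the dual weight distribution and so transfers verbatim to punctured Reed--Solomon codes with $n$ in place of $q$. Your identification of the dual as an MDS code of length $n$ and dimension $n-d-1$, and the resulting bound $\E[|R|^2]\lesssim q^{-(d+1)}p^n$ against the main term $p^n$, matches the paper's bookkeeping exactly.
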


We do not repeat the proof but is it the exact same as that of Theorem~\ref{thm:sharpthreshold}. Let $\mathcal{E}_a$ denote the event that the agreement between $S$ and $\RS[q,rq]$ is $a$. Union bound gives us that

\begin{equation}
\label{eqn:expagreement}
\Prob(\mathcal{E}_{q - \ell}) \leq \binom{q}{\ell}q^{rq+1}p^{q - \ell }.
\end{equation}
\
So if $\ell$ is such that the RHS of~\ref{eqn:expagreement} is $o(1)$. Then the agreement is almost surely less than $q - \ell$. For the other direction, we have the following corollary:

\begin{corollary}
\label{corr:punctured}
Let $\epsilon \geq \max\left\{10q^{- \frac{d}{q - \ell}}, \sqrt{q^{1 - \frac{d}{q - \ell}} \cdot \log q }\right\}$. Let $S$ be a randomly chosen subset of $\F_q^2$ with each point picked independently with probability at least $q^{-\frac{d}{q - \ell}}(1 + \epsilon)$, then with probability at least $1 - o(1)$, the agreement between $S$ and $\RS[q,d]$ is at least $q - \ell $.
\end{corollary}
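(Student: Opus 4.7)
The plan is to reduce Corollary~\ref{corr:punctured} directly to the punctured-code theorem stated immediately above it. The key observation is that having agreement at least $q-\ell$ between some codeword of $\RS[q,d]$ and $S$ is precisely the statement that there exists a subset $T\subseteq\{a_1,\ldots,a_q\}$ of size $q-\ell$ and a polynomial $p\in\mathcal{P}_d$ whose graph restricted to $T$ lies entirely inside $S$. Equivalently, letting $A_i:=\{z\in\F_q:(a_i,z)\in S\}$ for $a_i\in T$, we want
\[
\RS[q,d]|_T\;\cap\;\prod_{a_i\in T}A_i\;\neq\;\emptyset,
\]
which is a random list-recovery question for the punctured code $\RS[q,d]|_T$ of length $n:=q-\ell$.

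First I would fix any single deterministic subset $T\subseteq\{a_1,\ldots,a_q\}$ with $|T|=n=q-\ell$; since we only need to establish \emph{existence} of one high-agreement codeword, no union bound over $T$ is required, and the event that some codeword has agreement $\geq q-\ell$ certainly contains the event that some codeword agrees on this particular $T$. Since each point of $\F_q^2$ lies in $S$ independently with probability at least $q^{-d/n}(1+\epsilon)$, the sets $\{A_i\}_{a_i\in T}$ are mutually independent and each element of $\F_q$ lies in any given $A_i$ independently with probability at least $q^{-d/n}(1+\epsilon)$; by monotonicity in this inclusion probability we may as well assume equality.

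Next I would invoke the preceding punctured-code theorem on $\mathcal{C}':=\RS[q,d]|_T$ with the sets $\{A_i\}_{a_i\in T}$ and the same $\epsilon$. The hypotheses of that theorem (on $n$, on $q^{-d/n}\geq\log n/q$, and on $\epsilon$) are implied in the nontrivial regime by the quantitative lower bound on $\epsilon$ that Corollary~\ref{corr:punctured} assumes: the corollary's bound
\[
\epsilon\;\geq\;\max\Bigl\{10\,q^{-d/n},\;\sqrt{q^{1-d/n}\log q}\Bigr\}
\]
dominates the theorem's $\omega(\max\{q^{-d/n},\sqrt{q^{1-d/n}\log(q^{1-d/n})}\})$ requirement (up to the trivial identity $\log(q^{1-d/n})\leq\log q$). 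The theorem thus produces, with probability $1-o(1)$, an element $w'\in\mathcal{C}'\cap\prod_{a_i\in T}A_i$. Any such $w'$ comes from a polynomial $p\in\mathcal{P}_d$, and the corresponding codeword $w\in\RS[q,d]$ satisfies $(a_i,p(a_i))\in S$ for every $a_i\in T$, giving $|w\cap S|\geq|T|=q-\ell$.

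I do not anticipate a serious obstacle: the corollary is essentially a syntactic application of the punctured-code theorem, with the only care point being the routine verification that the corollary's explicit quantitative bound on $\epsilon$ implies the theorem's $\omega(\cdot)$ hypothesis in the relevant parameter range (i.e.\ when $d/(q-\ell)$ is bounded away from $1$; outside that range the statement is either vacuous or already handled by the matching union-bound upper estimate~\eqref{eqn:expagreement}).
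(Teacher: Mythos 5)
Your proposal is correct and follows essentially the same route as the paper: fix a single set of $q-\ell$ evaluation points, pass to the punctured code $\RS[q,d]|_{q-\ell}$ and the corresponding column sets, and invoke the punctured version of Theorem~\ref{thm:sharpthreshold}. Your explicit check that the corollary's bound on $\epsilon$ implies the theorem's hypothesis is a detail the paper leaves implicit, but the argument is the same.
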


\begin{proof}
For $i \in [q - \ell]$, let us denote 
\[
S_i:= \{j \suchthat (i,j) \in S\}.
\]
Let us use $S' := S_1\times\cdots \times S_{q - \ell }$. Let us denote $\mathcal{C} = \RS[q,d]|_{q - \ell }$.  Formally, for a codeword $w \in \RS[q,d]$, denote $p_w$ to be the polynomial corresponding to $m$. We have
\[
\mathcal{C} = \{(i, p_w(i)) \suchthat i \in [q - \ell])\}
\]
We observe that the conditions in Theorem~\ref{thm:sharpthreshold} hold, so
\[
\Prob(C' \cap S' = \emptyset) = o(1)
\]
as desired.
\end{proof}

\begin{corollary*}[\ref{corr:errors}, restated]
Given a random subset $S \subseteq \F_q^2$ where each point is picked with probability $p$, then with probability at least $1 - o(1)$, the largest agreement $\RS[q,d]$ with $S$ is $ d \log_{\frac{1}{p}} q  - O \left( \frac{q}{\log \left( \frac{1}{p}\right)}\right)$.
\end{corollary*}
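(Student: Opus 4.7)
The plan is to combine matching upper and lower bounds on $M := \max_{w \in \RS[q,d]} |w \cap S|$. The upper bound follows from the union bound in Equation~(\ref{eqn:expagreement}): setting $q - \ell^+ := \lceil d\log_{1/p}q + Cq/\log(1/p)\rceil$ for a sufficiently large constant $C$ makes the right-hand side at most $\binom{q}{\ell^+}\cdot q\cdot e^{-Cq}\le 2^q\cdot q\cdot e^{-Cq} = o(1)$, since the main factor $q^{d+1}p^{d\log_{1/p}q}$ collapses to $q$ and the extra slack contributes $p^{Cq/\log(1/p)}=e^{-Cq}$. Thus $M < q-\ell^+$ with probability $1-o(1)$, giving the upper half of the estimate.

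For the lower bound, let $n := q - \ell^- = \lfloor d\log_{1/p}q - C'q/\log(1/p)\rfloor$ for a constant $C'$, and apply Corollary~\ref{corr:punctured} to the punctured code $\RS[q,d]|_n$ using the columns $S_1, \dots, S_n$ of $S$. Unwinding the inequality $p \ge q^{-d/n}(1+\epsilon)$ (take logs and expand to first order) shows that the margin we can afford is $\epsilon \asymp C'\log(1/p)/\log q$, and the two hypotheses of the corollary reduce to $\epsilon \gtrsim q^{-d/n}\asymp p$ and $\epsilon \gtrsim \sqrt{q^{1-d/n}\log q}\asymp \sqrt{pq\log q}$. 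In the nontrivial regime $d=\Theta(q)$ and $p \lesssim q^{-d/q}$ (so that the target $d\log_{1/p}q$ does not exceed $q$), both conditions can be verified for a suitable constant $C'$, and Corollary~\ref{corr:punctured} gives $M \ge n$ with probability $1-o(1)$.

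The main obstacle will be verifying the second $\epsilon$-condition uniformly across the allowed range of $p$, since the threshold $\sqrt{pq\log q}$ becomes comparable to the available margin $\epsilon \asymp 1/\log q$ when $p$ is close to the critical probability $q^{-d/q}$. A careful choice of $C'$, which is absorbed by the final error term $O(q/\log(1/p))$ (since $\log(1/p)$ is bounded below by a constant in that regime), or an appeal to the slightly stronger form of Theorem~\ref{thm:sharpthreshold}, should close the gap. Combining the two halves then yields $M = d\log_{1/p}q - O(q/\log(1/p))$ with probability $1-o(1)$.
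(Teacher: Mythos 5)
Your proposal follows essentially the same route as the paper: the union bound of Equation~(\ref{eqn:expagreement}) for the upper half, and Corollary~\ref{corr:punctured} applied to the punctured code $\RS[q,d]|_{n}$ with $n \approx d\log_{1/p}q - C'q/\log(1/p)$ for the lower half (the paper's own proof simply asserts that ``the conditions for Corollary~\ref{corr:punctured} are satisfied'' without the verification you attempt). The ``obstacle'' you flag in the second $\epsilon$-condition is not a real obstruction but an inconsistency in the statement of Corollary~\ref{corr:punctured}: the hypothesis there reads $\epsilon \gtrsim \sqrt{q^{1-d/(q-\ell)}\log q} \asymp \sqrt{pq\log q}$, which is never satisfiable (one needs $pq \gtrsim \log q$ just for nonempty columns), whereas the condition actually used in Theorem~\ref{thm:sharpthreshold} is $\epsilon = \omega\bigl(\sqrt{q^{r-1}\log(q^{1-r})}\bigr) \asymp \sqrt{\log(pq)/(pq)}$; with that corrected hypothesis your margin $\epsilon \asymp \log(1/p)/\log q$ suffices throughout the nontrivial regime $d=\Theta(q)$, $p \le q^{-d/q}$, and your argument closes.
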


\begin{proof}
Let $a$ be an integer that denotes the maximum agreement between $S$ and $\RS(q,d)$. Suppose that $a \leq  d \log_{\frac{1}{p}} q$, then  setting $\ell = q - a$, and noting that the conditions for Corollary~\ref{corr:punctured} are satisfied, we get that with probability at least $1 - o(1)$, there is a polynomial that agrees with the set $S$ in the first $q - \ell$ coordinates. On the other hand, if $a \geq d\log_{ \frac{1}{p}}q + 4\frac{q}{\log \left( \frac{1}{p}\right)}$, again, setting $\ell = q - a$, Union Bound gives us:

\begin{align*}
\Prob(\mathcal{E}_{q - \ell}) & \leq \sum_{w \in \mathcal{C}} \sum_{\substack{P \subset \F_q \\ |P| = q - \ell}} \Prob(w|_P \subseteq S)\\
& = \binom{q}{\ell}q^{d+1}p^{q - r} \\
& \leq \left( e\frac{q}{\ell}\right)^{\ell} q^{d+1}p^{q - \ell} \\
& \leq e^qq^{d+1}p^{q - \ell} \\
& \ll \frac{1}{q} .
\end{align*}

And so we have that with probability at least $1 - o(1)$, the agreement of $\RS(q,d)$ with $S$ is $d \log_{\frac{1}{p}} q  - O\left( \frac{q}{\log \left( \frac{1}{p}\right)}\right)$.
\end{proof}

\section{Conclusion}

We started off by attempting to prove a bounded depth circuit lower bound for Andreev's Problem. This led us into (the decision version of the) random List Recovery of Reed-Solomon codes. Here we show a sharp threshold for a wide range of parameters, with nontrivial threshold intervals in some cases. However, one of the unsatisfactory aspects about Theorem~\ref{thm:fullrange} is that it is proved in a relatively `hands-off' way possibly resulting in a suboptimal guarantee on $\epsilon$. The obvious open problem that is the following:

\paragraph{Open Problem:} Is Theorem~\ref{thm:fullrange} with a better bound on $\epsilon$?

If it is true with a much smaller $\epsilon$, it would extend in a straightforward way to the $\text{AC}^0[\oplus]$ lower bound as well. Another point we would like to make is that the only thing stopping us from proving Theorem~\ref{thm:fullrange} for general $\text{MDS}$ codes is the lack of Proposition~\ref{prop:symm}

\paragraph{Acknowledgements} I am extremely grateful to Amey Bhangale, Suryateja Gavva, and Mary Wootters for the helpful discussions. I am especially grateful to Nutan Limaye and Avishay Tal for explaining~\cite{LSSTV19} and~\cite{CHLT19} respectively to me. I am grateful to Partha Mukhopadhyay for suggesting the $\text{AC}^0[\oplus]$ lower bound problem for list recovery. I am also extremely grateful to Swastik Kopparty for the discussions that led to Corollary~\ref{corr:errors}, and to Bhargav Narayanan for the discussions that led to Theorem~\ref{thm:fullrange}.

\bibliographystyle{alpha}
\bibliography{references.bib}

\newcommand{\etalchar}[1]{$^{#1}$}
\begin{thebibliography}{ABK{\etalchar{+}}06}

\bibitem[AB09]{AB09}
Sanjeev Arora and Boaz Barak.
\newblock {\em Computational Complexity: A Modern Approach}.
\newblock Cambridge University Press, New York, NY, USA, 1st edition, 2009.

\bibitem[ABK{\etalchar{+}}06]{ABKMR06}
Eric Allender, Harry Buhrman, Michal Kouck{\'{y}}, Dieter van Melkebeek, and
  Detlef Ronneburger.
\newblock Power from random strings.
\newblock {\em {SIAM} J. Comput.}, 35(6):1467--1493, 2006.

\bibitem[Agr19]{A19}
Rohit Agrawal.
\newblock Coin theorems and the fourier expansion.
\newblock {\em CoRR}, abs/1906.03743, 2019.

\bibitem[CHLT19]{CHLT19}
Eshan Chattopadhyay, Pooya Hatami, Shachar Lovett, and Avishay Tal.
\newblock Pseudorandom generators from the second fourier level and
  applications to {AC0} with parity gates.
\newblock In {\em 10th Innovations in Theoretical Computer Science Conference,
  {ITCS} 2019, January 10-12, 2019, San Diego, California, {USA}}, pages
  22:1--22:15, 2019.

\bibitem[EGS09]{EGS09}
Martianus Ezerman, Markus Grassl, and Patrick Solé.
\newblock The weights in mds codes.
\newblock {\em Information Theory, IEEE Transactions on}, 57, 08 2009.

\bibitem[FK96]{FK96}
Ehud Friedgut and Gil Kalai.
\newblock Every monotone graph property has a sharp threshold, 1996.

\bibitem[GII{\etalchar{+}}19]{GIIKKT19}
Alexander Golovnev, Rahul Ilango, Russell Impagliazzo, Valentine Kabanets,
  Antonina Kolokolova, and Avishay Tal.
\newblock Ac0[p] lower bounds against {MCSP} via the coin problem.
\newblock {\em Electronic Colloquium on Computational Complexity {(ECCC)}},
  26:18, 2019.

\bibitem[HS17]{HS17}
Shuichi Hirahara and Rahul Santhanam.
\newblock On the average-case complexity of mcsp and its variants.
\newblock In {\em Proceedings of the 32Nd Computational Complexity Conference},
  CCC '17, pages 7:1--7:20, Germany, 2017. Schloss Dagstuhl--Leibniz-Zentrum
  fuer Informatik.

\bibitem[Joh86]{J86}
David~S Johnson.
\newblock The np-completeness column: An ongoing guide.
\newblock {\em J. Algorithms}, 7(2):289--305, June 1986.

\bibitem[JuR00]{JLR00}
Svante Janson, Tomasz~\L uczak, and Andrej Rucinski.
\newblock {\em Random graphs}.
\newblock John Wiley, New York, 2000.

\bibitem[LSS{\etalchar{+}}18]{LSSTV19}
Nutan Limaye, Karteek Sreenivasaiah, Srikanth Srinivasan, Utkarsh Tripathi, and
  S.~Venkitesh.
\newblock The coin problem in constant depth: Sample complexity and parity
  gates.
\newblock {\em CoRR}, abs/1809.04092, 2018.

\bibitem[O'D14]{O14}
Ryan O'Donnell.
\newblock {\em Analysis of Boolean Functions}.
\newblock Cambridge University Press, New York, NY, USA, 2014.

\bibitem[TV06]{TV06}
Terence Tao and Van~H. Vu.
\newblock {\em Additive Combinatorics}.
\newblock Cambridge Studies in Advanced Mathematics. Cambridge University
  Press, 2006.

\end{thebibliography}

\appendix

\section{More preliminaries}
\label{sec:prelims}

\subsection{Properties of Reed-Solomon codes}
\label{sec:RS}

The first fact we will use is that the dual vector space of a Reed-Solomon code is also a Reed-Solomon code.

\begin{fact}
\label{fact:dual}
Let $\mathcal{C} := \RS[q,d]$. Then $\mathcal{C}^{\perp} = \RS[q,q - d - 1]$.
\end{fact}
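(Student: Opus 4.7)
The plan is to prove equality by combining a dimension count with a one-sided inclusion, both of which rest on elementary finite-field algebra. First I would observe that the evaluation map $p \mapsto (p(a_1), \ldots, p(a_q))$ is injective on $\mathcal{P}_d$ (a nonzero polynomial of degree at most $d < q$ has fewer than $q$ roots in $\F_q$), so $\dim \RS[q,d] = d+1$ and hence $\dim \RS[q,d]^{\perp} = q-d-1$. The same injectivity argument pins the dimension of the candidate dual code on the right-hand side of the fact, so the two dimensions agree once the degree parameter is chosen correctly.

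The core of the argument is the power-sum identity in $\F_q$: for every $0 \le k \le q-2$,
\[
\sum_{a \in \F_q} a^k \;=\; 0.
\]
This follows from the cyclicity of $\F_q^{\ast}$. For $k=0$ one sums $q$ ones, which vanishes since $\mathrm{char}(\F_q) \mid q$. For $0 < k \le q-2$, the map $x \mapsto x^k$ is a nontrivial homomorphism on $\F_q^{\ast}$ (the kernel has index $>1$), and summing a nontrivial character of a finite abelian group over the whole group gives $0$. As an immediate corollary, every polynomial of degree at most $q-2$ sums to $0$ when evaluated over $\F_q$.

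Now I would apply this to the product of two polynomials. For $p \in \mathcal{P}_d$ and $r$ a polynomial of degree small enough that $\deg(pr) \le q-2$, the Euclidean inner product of their evaluation vectors is
\[
\langle (p(a_i))_{i=1}^{q},\, (r(a_i))_{i=1}^{q} \rangle \;=\; \sum_{a \in \F_q} p(a)\, r(a) \;=\; 0,
\]
which gives containment of the candidate Reed-Solomon code inside $\RS[q,d]^{\perp}$. Combined with the dimension match above, this promotes containment to equality and completes the proof.

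I do not anticipate a genuine obstacle, since each ingredient (injectivity of evaluation on low-degree polynomials, and the character-sum identity) is essentially a one-line finite-field computation. The only thing requiring care is tracking the indexing convention connecting ``degree at most $d$'' to ``dimension $d+1$,'' which is what determines the precise parameter of the dual code; once the dimension count is pinned down, the power-sum identity supplies the only nontrivial content of the argument.
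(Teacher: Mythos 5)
The paper offers no proof of this fact (it is quoted as a standard property of Reed--Solomon codes), so there is no internal argument to compare against; your strategy --- the power-sum identity $\sum_{a \in \F_q} a^k = 0$ for $0 \le k \le q-2$ together with a dimension count --- is the standard and correct route, and both ingredients are justified correctly.

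The gap is that you never commit to the degree parameter of the dual code, and committing to it exposes a mismatch with the statement as printed. Under the paper's convention ($\RS[q,d]$ is the evaluation code of polynomials of degree \emph{at most} $d$, so $\dim \RS[q,d] = d+1$), your condition $\deg(pr) \le q-2$ forces $\deg r \le q-d-2$, so the containment you actually prove is $\RS[q,q-d-2] \subseteq \RS[q,d]^{\perp}$; the dimension count gives $\dim \RS[q,d]^{\perp} = q-(d+1) = q-d-1 = \dim \RS[q,q-d-2]$, so the correct conclusion is $\RS[q,d]^{\perp} = \RS[q,q-d-2]$. The code named in the Fact, $\RS[q,q-d-1]$, has dimension $q-d$, which cannot be the dual's dimension, since a code and its dual have dimensions summing to $q$. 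In other words, your argument (once made precise) proves a statement that differs from the Fact by one in the degree parameter: the Fact as printed carries an off-by-one under the paper's own convention (it would be correct under the alternative convention ``degree less than $d$,'' i.e.\ dimension $d$). Your closing hedge that the indexing convention ``determines the precise parameter of the dual code'' is exactly the point where the proof must commit --- and where, as literally stated, the Fact does not survive. This is harmless downstream (the paper only uses the weaker weight-distribution bound $W_{q-i} \le q^{q-rq}/i!$ for $\mathcal{C}^{\perp}$, which remains valid and merely loses a factor of $q$), but your write-up should either state the corrected parameter or flag the discrepancy rather than leave the exponent unspecified.
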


For $t \neq 0$, let $W_t$ be the number of codewords of weight $t$ in $\RS[q,d]$. This is a relatively well understood quantity.

\begin{theorem}[\cite{EGS09}]
\label{thm:MDSwtdist}
We have:
\[
|W_{q-i}| = \binom{q}{i}\sum_{j = 0}^{d - i }(-1)^j\binom{q-i}{j}(q^{d - i - j +1} - 1).
\]
\end{theorem}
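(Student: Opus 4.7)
The plan is to prove the formula for arbitrary MDS codes via double counting followed by binomial inversion, then invoke the fact that $\RS[q,d]$ is MDS. Concretely, the minimum distance of $\RS[q,d]$ is $q - d$, so every nonzero codeword has at most $d$ zero coordinates, and any $d+1$ evaluation points determine the underlying polynomial uniquely. In particular, for any subset $T \subseteq \F_q$ of size $i \leq d$, the polynomials in $\mathcal{P}_d$ vanishing on $T$ form a linear subspace of dimension $d+1-i$, so the number of codewords of $\RS[q,d]$ that are zero on $T$ is exactly $q^{d+1-i}$.

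Let $M_j$ denote the number of codewords of $\RS[q,d]$ with exactly $j$ zero coordinates, so that $W_{q-j} = M_j$. For each $i \leq d$, I would count pairs $(c, T)$, where $c$ is a codeword, $|T| = i$, and $T$ is contained in the zero set $Z(c)$. Summing over $T$ first gives $\binom{q}{i}\, q^{d+1-i}$. Summing over $c$ first gives $\binom{q}{i} + \sum_{j=i}^{d} \binom{j}{i} M_j$: the $\binom{q}{i}$ term accounts for the zero codeword (which contains every $T$), and the sum terminates at $d$ because every other codeword has at most $d$ zeros by the minimum-distance bound. Equating the two counts yields
\[
\sum_{j=i}^{d} \binom{j}{i} M_j \;=\; \binom{q}{i}\left(q^{d+1-i}-1\right).
\]

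Next, I would apply the standard binomial (Möbius) inversion: if $f(i) = \sum_{j \geq i} \binom{j}{i} g(j)$ with finite support, then $g(i) = \sum_{j \geq i} (-1)^{j-i} \binom{j}{i} f(j)$. With $f(i) = \binom{q}{i}(q^{d+1-i}-1)$ and $g(j) = M_j$ this produces
\[
M_i \;=\; \sum_{j=i}^{d} (-1)^{j-i} \binom{j}{i} \binom{q}{j}\left(q^{d+1-j}-1\right).
\]
Reindexing by $k = j - i$ and using the identity $\binom{i+k}{i}\binom{q}{i+k} = \binom{q}{i}\binom{q-i}{k}$ transforms the right-hand side into the stated expression.

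There is no real obstacle here: the proof is a short and standard inclusion-exclusion / binomial-inversion argument whose only subtlety is correctly isolating the contribution of the zero codeword (whose $q$ zeros lie outside the summation range $[i,d]$) and then reindexing to obtain the displayed form. The same argument works verbatim for any MDS code, which is why the formula transfers to the punctured Reed-Solomon codes $\RS[q,d]|_n$ used in Corollary~\ref{corr:punctured}.
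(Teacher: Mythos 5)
Your proof is correct. The paper itself gives no proof of Theorem~\ref{thm:MDSwtdist} --- it is cited from~\cite{EGS09} --- so there is nothing to compare against; but your argument is the standard derivation of the MDS weight distribution and every step checks out. The count of $q^{d+1-i}$ codewords vanishing on a fixed $T$ with $|T|=i\le d$ is right (divide by $\prod_{a\in T}(x-a)$), the double count correctly isolates the zero codeword, the system $\sum_{j\ge i}\binom{j}{i}M_j=\binom{q}{i}(q^{d+1-i}-1)$ is triangular with unit diagonal so the binomial inversion is legitimate, and the reindexing via $\binom{i+k}{i}\binom{q}{i+k}=\binom{q}{i}\binom{q-i}{k}$ yields exactly the displayed formula (a sanity check at $d=0$ and $d=1$ confirms it). Two small remarks: the formula as stated is only meaningful for $i\le d$, with the empty sum correctly giving $W_{q-i}=0$ for $d<i<q$ by the minimum-distance bound, which your argument implicitly covers; and it is worth noting that the paper never actually uses this exact formula --- it only uses the cruder upper bound of Proposition~\ref{prop:RSwtdist}, whose one-line proof (intersecting $\mathcal{C}$ with $i$ coordinate hyperplanes) is the ``$\le$'' half of your double count without the inversion. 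Your observation that the argument runs verbatim for any MDS code, and hence for the punctured codes $\RS[q,d]|_n$, is also correct, since any $k$ coordinates of an MDS code form an information set.
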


However, we just need the following slightly weaker bound that is easier to prove:

\begin{proposition}
\label{prop:RSwtdist}
We have $W_{q-i} \leq \frac{q^{d+1}}{i!} $.
\end{proposition}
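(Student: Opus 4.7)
The plan is to upper-bound $W_{q-i}$ by a simple double-counting argument exploiting the fact that a codeword of weight $q-i$ in $\RS[q,d]$ corresponds to a polynomial of degree at most $d$ that vanishes on exactly $i$ of the $q$ evaluation points $a_1, \ldots, a_q$.

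First I would handle the trivial case $i > d$, where $W_{q-i} = 0$ because a nonzero polynomial of degree at most $d$ has at most $d$ roots. So assume $i \leq d$. The key step is to overcount: rather than trying to count polynomials with exactly $i$ zeros among the evaluation points, I count pairs $(p, Z)$ where $Z \subseteq \F_q$ with $|Z| = i$ and $p$ is a polynomial of degree at most $d$ vanishing on every point of $Z$. Since any polynomial counted by $W_{q-i}$ contributes at least one such pair (with $Z$ being its zero set), this quantity upper-bounds $W_{q-i}$.

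Next I fix a set $Z$ of size $i$ and count polynomials $p \in \mathcal{P}_d$ vanishing on $Z$. Any such $p$ factors as $p(x) = \prod_{a \in Z}(x - a) \cdot g(x)$ where $g$ has degree at most $d - i$, and conversely every such product gives a polynomial of degree at most $d$ vanishing on $Z$. There are exactly $q^{d-i+1}$ choices of $g$. Summing over the $\binom{q}{i}$ choices of $Z$, I get
\[
W_{q-i} \leq \binom{q}{i} \cdot q^{d-i+1} \leq \frac{q^i}{i!} \cdot q^{d-i+1} = \frac{q^{d+1}}{i!},
\]
using the standard bound $\binom{q}{i} \leq q^i/i!$. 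This yields the claimed inequality.

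There isn't really a main obstacle here: the argument is elementary and hinges only on the factorization of polynomials in $\F_q[x]$ with prescribed roots, plus a one-line counting estimate. The slight subtlety (worth a sentence in the writeup) is that we are allowed to overcount codewords of higher weight when summing over $Z$, but this only strengthens the inequality in the direction we want.
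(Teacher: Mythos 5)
Your proof is correct and is essentially the paper's argument: both bound $W_{q-i}$ by $\binom{q}{i}\cdot q^{d+1-i} \leq q^{d+1}/i!$, counting for each $i$-subset of coordinates the codewords vanishing there. The only cosmetic difference is that you justify the count $q^{d+1-i}$ via polynomial factorization while the paper phrases it as imposing $i$ linear constraints on a $(d+1)$-dimensional space; your version also makes the $i>d$ case and the direction of the overcount explicit, which is a small plus.
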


\begin{proof}
We have that $\mathcal{C}$ is a $d+1$-dimensional subspace of $\F^q$. Add $i$ extra constraints by choosing some set of $i$ coordinates and restricting them to $0$. As long as $i < d$, these new constraints strictly reduce the dimension of $\mathcal{C}$. There are exactly $\binom{q}{i}$ ways to choose the coordinates, and the resulting space has dimension $d+1 - i$. 
Therefore, the number of codewords of weight at most $q - i$ is at most $q^{d+1 -i} \cdot \binom{q}{i}\cdot  \leq \frac{q^{d+1}}{i!}$.
\end{proof}
The above bound is asymptotically only a factor of $e$ away for small values of $i$.

\subsubsection{Punctured Reed-Solomon codes}

All of the statements above when instead of Reed-Solomon codes, one considers \emph{punctured} Reed-Solomon codes. For a $w = (w_1,\ldots,w_n) \in \F_q^n$, and a set $S \subset [n']$, let us define
\[
w|_S = (w_i)_{i \in S}.
\]
For a subset $\mathcal{C} \subset \F_q^{n'}$, let us define
\[
\mathcal{C}|_S : = \{w|_S\suchthat w \in \mathcal{C}\}
\]

We call $\RS[q,d]|_S$ the $S$-punctured $\RS[q,d]$ code. Let $\mathcal{C}$ denote the $\RS[q,d]|_n$ code. Since the properties we will care about are independent of the specific set $S$, let is just parametrize this by $|S| =: n$. The following properties hold

\begin{enumerate}
\item $\mathcal{C}^{\perp} = \RS[q, q - d - 1]_n$.
\item Let $W_{i}$ be the number of codewords in $\mathcal{C}$ code of weight $i$. Then we have 
\[
W_{n-i} \leq \frac{q^{k - i}n^i}{i!} .
\]
\end{enumerate}

Both facts can be easily checked.

\subsection{Basic probability inequalities}
\label{sec:probability}

We will use the standard (multiplicative) Chernoff bound for sums of i.i.d. Bernoulli random variables. Let $X_1,\ldots,X_n$ be independent $\operatorname{Ber}(p)$ random variables. Let $X := \sum_{i \in [n]}X_i$and denote $\mu = \E[X] = np$. Then for any $\epsilon \in (0,1)$, we have:

\begin{equation}
\label{eqn:Chernoff}
\Prob\left(|X - \mu|\geq \epsilon\mu \right) \leq e^{\frac{\epsilon^2\mu}{2}}.
\end{equation}

We also have (a special case of) the Paley-Zygmund inequality, which states that for a nonnegative random variable $X$, 
\begin{equation}
\label{eqn:PaleyZygmund}
\Prob(X > 0)\geq \frac{\E^2[X]}{\E[X^2]}.
\end{equation}

\subsection{Fourier analysis over $\F_q$}
\label{sec:Fourier}

For functions $u,v: \F_q^n \rightarrow \C$, we have a normalized inner product $\langle u,v \rangle := \frac{1}{q^n}\sum_{s \in \F_q^n}u(s)\overline{v(s)}$. Consider any symmetric, non-degenerate bi-linear map $\chi:\F_q^n \times \F_q^n \rightarrow \R/\Z$ (such a map exists). For an $\alpha \in \F_q^n$, the \emph{character} function associated with $\alpha$, denoted by $\chi_{\alpha}:\F_q^n \rightarrow \C$ is given by $\chi_{\alpha}(x) = e^{-2\pi i \chi(\alpha, x)}$.
 
 We have that for all distinct $\alpha, \beta \in \F_q$, we have that $\langle\chi_{\alpha},\chi_{\beta}\rangle = 0$, and every function $f :\F_q \rightarrow \C$ can be written in a unique way as $f(x) = \sum_{\alpha \in \F_q}\widehat{f}(\alpha)\chi_{\alpha}(x)$. Here the $\widehat{f}(\alpha)$'s are called the \emph{Fourier coefficients}, given by 
 \[
 \widehat{f}(\alpha) = \langle f,\chi_{\alpha} \rangle.
 \]
 
 We will state some facts that we will use in the proof of Theorem~\ref{thm:sharpthreshold}.  The interested reader is referred to the excellent book of Tao and Vu~\cite{TV06} (chapter 4) for further details. 
 
 \begin{fact}
 For $\F_q^n \ni \alpha \neq 0$, we have:
 \[
 \langle 1,\chi_{\alpha} \rangle = 0.
 \]
 \end{fact}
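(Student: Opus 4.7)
The plan is to reduce the claim to the standard vanishing of a non-trivial character sum on a finite abelian group via a shift-invariance trick. Unwinding definitions,
\[
\langle 1, \chi_{\alpha} \rangle = \frac{1}{q^n} \sum_{s \in \F_q^n} \overline{\chi_{\alpha}(s)} = \frac{1}{q^n} \overline{T}, \qquad \text{where } T := \sum_{s \in \F_q^n} \chi_{\alpha}(s),
\]
so it suffices to establish $T = 0$ whenever $\alpha \neq 0$.

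First I would observe that bilinearity of $\chi$ forces $\chi_{\alpha}$ to be a group homomorphism from $(\F_q^n, +)$ into the unit circle, because $e^{-2\pi i \chi(\alpha, s+t)} = e^{-2\pi i \chi(\alpha, s)} \cdot e^{-2\pi i \chi(\alpha, t)}$ gives $\chi_{\alpha}(s + t) = \chi_{\alpha}(s) \chi_{\alpha}(t)$. Next, I would combine the non-degeneracy of $\chi$ with the hypothesis $\alpha \neq 0$ to produce an element $s_0 \in \F_q^n$ with $\chi(\alpha, s_0) \not\equiv 0 \pmod{1}$, so that $\chi_{\alpha}(s_0) \neq 1$; this is immediate from the definition of non-degeneracy of a bilinear pairing valued in $\R/\Z$.

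The final step is the shift trick: using the homomorphism property and translation-invariance of summation over $\F_q^n$,
\[
\chi_{\alpha}(s_0) \cdot T = \sum_{s \in \F_q^n} \chi_{\alpha}(s_0) \chi_{\alpha}(s) = \sum_{s \in \F_q^n} \chi_{\alpha}(s_0 + s) = T,
\]
so $\bigl(1 - \chi_{\alpha}(s_0)\bigr) T = 0$, which forces $T = 0$ since $\chi_{\alpha}(s_0) \neq 1$. I do not anticipate any real obstacle here: the entire argument is a three-line computation, and the only place where one needs to pause is the invocation of non-degeneracy to locate the witness $s_0$, which is a direct consequence of the definition.
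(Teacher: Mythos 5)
Your proof is correct and is the standard argument: the paper states this fact without proof (deferring to the Tao--Vu reference), and the shift-invariance trick combined with non-degeneracy to produce a witness $s_0$ with $\chi_{\alpha}(s_0)\neq 1$ is exactly the textbook derivation of orthogonality of characters. The only point worth being explicit about is that ``bilinear'' here means biadditive into $\R/\Z$, which is what licenses the homomorphism property $\chi_{\alpha}(s+t)=\chi_{\alpha}(s)\chi_{\alpha}(t)$; you use this correctly.
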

 
 \begin{fact}[Plancherel's Theorem]
 \label{fact:Plancherel}
 For functions $f,g:\F^n \rightarrow \C$, we have
 \[
 \langle f,g\rangle = \sum_{\alpha}\hat{f}(\alpha) \hat{g}(\alpha).
 \]
 \end{fact}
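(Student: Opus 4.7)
The plan is to prove Plancherel's identity in the usual textbook way: expand both $f$ and $g$ in the character basis, and reduce everything to orthogonality of characters, which in turn follows from the immediately preceding fact that $\langle 1, \chi_\gamma \rangle = 0$ for $\gamma \neq 0$.

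First I would establish the orthogonality relation $\langle \chi_\alpha, \chi_\beta \rangle = \mathbbm{1}[\alpha = \beta]$. The key observation is that characters form a group under pointwise multiplication: because $\chi(\cdot,\cdot)$ is bilinear, we have $\chi_\alpha(x)\overline{\chi_\beta(x)} = e^{-2\pi i \chi(\alpha-\beta, x)} = \chi_{\alpha-\beta}(x)$. Therefore
\[
\langle \chi_\alpha, \chi_\beta \rangle = \frac{1}{q^n}\sum_{x\in \F_q^n} \chi_{\alpha-\beta}(x) = \langle 1, \chi_{\alpha-\beta}\rangle,
\]
which by the preceding fact equals $0$ when $\alpha \neq \beta$ and obviously $1$ when $\alpha = \beta$ (since $\chi_0 \equiv 1$).

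Next I would use the Fourier expansion stated in the excerpt, namely $f = \sum_\alpha \widehat{f}(\alpha)\chi_\alpha$ and $g = \sum_\beta \widehat{g}(\beta)\chi_\beta$, together with sesquilinearity of $\langle \cdot,\cdot\rangle$, to write
\[
\langle f, g\rangle = \sum_{\alpha,\beta} \widehat{f}(\alpha)\overline{\widehat{g}(\beta)}\, \langle \chi_\alpha,\chi_\beta\rangle.
\]
Applying the orthogonality relation collapses the double sum to the diagonal, giving the claimed identity $\langle f, g\rangle = \sum_\alpha \widehat{f}(\alpha)\overline{\widehat{g}(\alpha)}$ (the conjugate being implicit in the statement as written).

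There is no real obstacle here; the only mild subtlety is being careful that the complex conjugate lands on the right factor when moving it through the inner product, and that the nondegeneracy of $\chi$ is what guarantees $\alpha \mapsto \chi_\alpha$ is injective so that $\chi_{\alpha-\beta} = 1$ forces $\alpha = \beta$. The Fourier inversion formula itself (the existence and uniqueness of the expansion) is assumed as part of the setup in the excerpt, so there is nothing to prove there.
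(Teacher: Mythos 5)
Your argument is correct and is the standard proof: orthogonality of characters via the preceding fact $\langle 1,\chi_\gamma\rangle=0$ for $\gamma\neq 0$, followed by bilinear expansion; the paper states Plancherel as a Fact without proof (deferring to Tao--Vu), so there is nothing to compare against. You are also right that, with the paper's conjugate-linear inner product, the identity should read $\langle f,g\rangle=\sum_\alpha \hat f(\alpha)\overline{\hat g(\alpha)}$; the missing conjugate in the statement is harmless in the paper's applications (where $g$ is real-valued and the relevant index set $\mathcal{C}^{\perp}$ is closed under negation), and your handling of it is fine.
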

 
 \begin{fact}
 \label{fact:conv}
 Suppose $g: \F_q^n \rightarrow \C$ can be written as a product $g(x) = \prod_{i \in [t]}g_i(x)$, then we have the Fourier coefficients of $g$ given by: 

\begin{align*}
\widehat{g}(\alpha) & = \left(\widehat{g_1}\ast \cdots \ast \widehat{g_t}\right)(\alpha)\\
& = \sum_{\beta_1,\ldots ,\beta_{t-1}} \widehat{g_1}(\beta_1)\cdots \widehat{g_{t-1}}(\beta_{t-1}) \widehat{g_t}(\alpha - \sum_{i \in [q-1]}\beta_i).
\end{align*}
\end{fact}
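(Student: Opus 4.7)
The plan is to derive the convolution formula directly from the Fourier inversion expansion and the multiplicativity of characters. First I would use the inversion formula for each factor to write
\[
g_i(x) = \sum_{\beta_i \in \F_q^n} \widehat{g_i}(\beta_i)\, \chi_{\beta_i}(x),
\]
and then multiply these $t$ expansions together. Distributing the product yields
\[
g(x) = \prod_{i \in [t]} g_i(x) = \sum_{\beta_1,\ldots,\beta_t} \left( \prod_{i \in [t]} \widehat{g_i}(\beta_i) \right) \prod_{i \in [t]} \chi_{\beta_i}(x).
\]

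Next I would collapse the product of characters. Because $\chi:\F_q^n\times\F_q^n \to \R/\Z$ is bilinear, $\chi(\beta_1,x)+\cdots+\chi(\beta_t,x) = \chi(\beta_1+\cdots+\beta_t, x)$, and hence
\[
\prod_{i \in [t]} \chi_{\beta_i}(x) = \exp\!\left(-2\pi i \sum_{i} \chi(\beta_i,x)\right) = \chi_{\beta_1+\cdots+\beta_t}(x).
\]
Re-indexing the outer sum by $\alpha := \beta_1+\cdots+\beta_t$ and isolating the last variable $\beta_t = \alpha - \sum_{i<t}\beta_i$, I get
\[
g(x) = \sum_{\alpha} \chi_\alpha(x) \sum_{\beta_1,\ldots,\beta_{t-1}} \widehat{g_1}(\beta_1)\cdots \widehat{g_{t-1}}(\beta_{t-1})\, \widehat{g_t}\!\left(\alpha - \sum_{i \in [t-1]} \beta_i\right).
\]

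Finally, since every function on $\F_q^n$ has a unique expansion in the character basis $\{\chi_\alpha\}$ (the characters are orthonormal with respect to $\langle\cdot,\cdot\rangle$ and there are $q^n$ of them), I can read off $\widehat{g}(\alpha)$ as the coefficient of $\chi_\alpha$, which is exactly the stated $t$-fold convolution. I do not expect any real obstacle here: the only step worth checking is the bilinearity/additivity of $\chi$, which is given by hypothesis, and everything else is formal bookkeeping with the inversion formula.
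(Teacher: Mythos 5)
Your proof is correct. The paper states Fact~\ref{fact:conv} without proof (deferring to the standard reference), and your argument---expand each $g_i$ via the inversion formula, use bilinearity of $\chi$ to collapse $\prod_i \chi_{\beta_i}$ into $\chi_{\beta_1+\cdots+\beta_t}$, re-index, and read off coefficients by uniqueness of the character expansion---is exactly the standard derivation one would supply; you also implicitly fix the paper's typo, correctly summing over $i \in [t-1]$ rather than $[q-1]$ in the argument of $\widehat{g_t}$.
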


\begin{fact}
\label{prop:linspace}
If $g : \F_q^n \rightarrow \C$ is the indicator of a linear space $\mathcal{C}$, we have:

\[
    \widehat{g}(\alpha)= 
\begin{cases}
    \frac{|\mathcal{C}|}{|\F|^n},& \text{if } \alpha \in \mathcal{C}^{\perp}\\
    0,              & \text{otherwise}.
\end{cases}
\]
\end{fact}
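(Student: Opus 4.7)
The plan is to unfold the definition of the Fourier coefficient and then use the standard character-sum cancellation trick on the additive group $\mathcal{C}$. Since $g$ is the indicator of $\mathcal{C}$, the normalized inner product collapses to a sum over $\mathcal{C}$:
\[
\widehat{g}(\alpha) \;=\; \langle g, \chi_\alpha\rangle \;=\; \frac{1}{q^n}\sum_{x \in \F_q^n} g(x)\,\overline{\chi_\alpha(x)} \;=\; \frac{1}{q^n}\sum_{x \in \mathcal{C}} \overline{\chi_\alpha(x)}.
\]
From here I would split into two cases depending on whether $\alpha$ lies in $\mathcal{C}^{\perp}$.

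If $\alpha \in \mathcal{C}^{\perp}$, then by the definition of the dual (with respect to the bilinear pairing $\chi$), $\chi(\alpha, x) = 0$ for every $x \in \mathcal{C}$, so $\chi_\alpha(x) = 1$ identically on $\mathcal{C}$. The inner sum is therefore $|\mathcal{C}|$, yielding $\widehat{g}(\alpha) = |\mathcal{C}|/q^n$, as claimed.

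If on the other hand $\alpha \notin \mathcal{C}^{\perp}$, I would pick some $y \in \mathcal{C}$ with $\chi_\alpha(y) \neq 1$ (such a $y$ exists precisely because $\alpha \notin \mathcal{C}^{\perp}$). Using the bilinearity of $\chi$, we have $\chi_\alpha(x+y) = \chi_\alpha(x)\,\chi_\alpha(y)$, and since $\mathcal{C}$ is a subspace, translation by $y$ permutes $\mathcal{C}$. Reindexing the sum gives
\[
\sum_{x \in \mathcal{C}} \overline{\chi_\alpha(x)} \;=\; \sum_{x \in \mathcal{C}} \overline{\chi_\alpha(x+y)} \;=\; \overline{\chi_\alpha(y)} \sum_{x \in \mathcal{C}} \overline{\chi_\alpha(x)},
\]
and because $\overline{\chi_\alpha(y)} \neq 1$, this forces the sum to vanish, so $\widehat{g}(\alpha) = 0$.

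There isn't really a hard step here; the only thing requiring any care is verifying that the restriction of $\chi_\alpha$ to $\mathcal{C}$ is a \emph{nontrivial} character of the finite abelian group $\mathcal{C}$ precisely when $\alpha \notin \mathcal{C}^{\perp}$, which is immediate from how $\mathcal{C}^{\perp}$ is defined via the non-degenerate pairing $\chi$. Everything else is just bookkeeping with the bilinearity of $\chi$ and the group-translation symmetry of the sum.
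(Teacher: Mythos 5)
Your proof is correct and is the standard character-sum argument; the paper states this fact without proof (deferring to Tao--Vu), and your two-case split with the translation trick is exactly the canonical justification. The one point you rightly flag -- that $\mathcal{C}^{\perp}$ must be read as the annihilator of $\mathcal{C}$ under the pairing $\chi$, which coincides with the coding-theoretic dual only when $\chi(\alpha,x)$ is of the form $\psi\left(\sum_i \alpha_i x_i\right)$ for a nontrivial additive character $\psi$ of $\F_q$ -- is the only subtlety, and the paper itself glosses over it.
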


\subsection{Hypercontractivity and sharp thresholds}
\label{sec:FK}

Here we state some tools from the analysis of Boolean function that we will use:

\begin{definition}
\label{defn:transitive}
We say that a function $f :\{0,1\}^n\rightarrow \{0,1\}$ is \emph{transitive-symmetric} if for every $i,j \in [n]$, there is a permutation $\sigma \in \mathfrak{S}_n$ such that:
\begin{itemize}
\item[1.] $\sigma(i) = j$
\item[2.] $f(x_{\sigma(1)},\ldots,x_{\sigma_{n}}) = f(x)$ for all $x \in \{0,1\}^n$.
\end{itemize}
\end{definition}

Let $f : \{0,1\} \rightarrow \{0,1\}$ be a monotone function. We will state an important theorem by Friedgut and Kalai, as stated in the excellent reference~\cite{O14}, regarding sharp thresholds for balanced symmetric monotone Boolean functions. This will be another important tool that we will use.

\begin{theorem}[~\cite{FK96}]
\label{thm:FK}
Let $f: \{0,1\}^n\rightarrow \{0,1\}$ be a nonconstant, monotone, transitive-symmetric function and let $F:[0,1]\rightarrow[0,1]$ be the strictly increasing function defined by $F(p)=\Prob(f(p)=1)$. Let $p_{\text{crit}}$ be the critical probability such that $F(p_{\text{crit}})=1/2$ and assume without loss of generality that $p_{\text{crit}}\leq1/2$. Fix $0<\epsilon<1/4$ and let
\[
\eta=B\log(1/\epsilon) \cdot \frac{\log(1/p_{\text{crit}})}{\log n},
\]

where $B>0$ is a universal constant. Then assuming $\eta\leq1/2$,
\[
F(p_{\text{crit}} \cdot(1 - \eta))\leq\epsilon,~~~F(p_{\text{crit}}\cdot(1+\eta)) \geq1-\epsilon.
\]
\end{theorem}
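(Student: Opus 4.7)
The plan is to combine the Margulis-Russo identity with a $p$-biased analogue of the KKL influence bound, and then integrate the resulting differential inequality across the threshold window. For a monotone $f$, Russo's formula gives $F'(p) = \sum_{i=1}^n \mathrm{Inf}_i^{(p)}(f)$, where $\mathrm{Inf}_i^{(p)}(f)$ is the probability under the $p$-biased product measure that flipping coordinate $i$ changes $f$. Transitive-symmetry forces all $n$ such influences to be equal, so $F'(p) = n \cdot \mathrm{Inf}_1^{(p)}(f)$.

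The key analytic input is a $p$-biased KKL-type lower bound, which for transitive-symmetric $f$ takes the form
\[
n \cdot \mathrm{Inf}_1^{(p)}(f) \;\ge\; \frac{c\, \Var_p(f)\, \log n}{p\, \log(1/p)}.
\]
This is proved via hypercontractivity in the $p$-biased hypercube, whose hypercontractive constants degrade like $\sqrt{p(1-p)}$ as $p\to 0$ and thereby produce the $\log(1/p)$ denominator; the bound is tight on $\mathrm{OR}$, where $p_{\text{crit}} \sim 1/n$ and $F'(p_{\text{crit}}) = \Theta(n)$ match the right-hand side up to a constant. Combined with Russo and $\Var_p(f) = F(p)(1-F(p))$, this yields
\[
\frac{F'(p)}{F(p)(1-F(p))} \;\ge\; \frac{c \log n}{p \log(1/p)}.
\]

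The left side is $\frac{d}{dp}\log\frac{F(p)}{1-F(p)}$. Integrating from $p_0 := p_{\text{crit}}(1-\eta)$ up to $p_{\text{crit}}$, the substitution $u = \log(1/p)$ turns the right-hand-side integral into $c \log n \cdot \log(u_0/u_c)$, with $u_0 = \log(1/p_0)$ and $u_c = \log(1/p_{\text{crit}})$. Since $u_0 - u_c = \log\tfrac{1}{1-\eta} \ge \eta$ and $\eta \le 1/2$, this is at least $c\eta\log n/(2\log(1/p_{\text{crit}}))$. Demanding this exceed $\log\tfrac{1-\epsilon}{\epsilon} \asymp \log(1/\epsilon)$ yields $\eta \ge B\log(1/\epsilon)\log(1/p_{\text{crit}})/\log n$, as claimed; the symmetric integration on $[p_{\text{crit}}, p_{\text{crit}}(1+\eta)]$ handles the other half. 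The main technical obstacle is proving the displayed $p$-biased KKL bound with the precise $p\log(1/p)$ denominator -- this uses the $p$-biased hypercontractive inequality together with an argument in the style of the BKKKL theorem and Talagrand's $L^1$-isoperimetric inequality -- after which the remaining calculus is routine.
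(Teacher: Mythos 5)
This statement is quoted from Friedgut--Kalai \cite{FK96} (in the form given in O'Donnell's book) and the paper supplies no proof of it, so there is nothing internal to compare against; what you have written is, correctly, the standard proof strategy of the cited theorem itself. Your outline is sound: Russo's formula plus transitive-symmetry gives $F'(p) = n\cdot\mathrm{Inf}_1^{(p)}(f)$, the $p$-biased KKL/Talagrand bound gives $F'(p) \geq c\,F(p)(1-F(p))\log n / (p\log(1/p))$, and integrating $\frac{d}{dp}\log\frac{F}{1-F}$ over the multiplicative window via $u=\log(1/p)$ (using $\log(u_0/u_c)\geq (u_0-u_c)/u_0 \geq \eta/(2\log(1/p_{\text{crit}}))$, where $p_{\text{crit}}\leq 1/2$ is needed) delivers exactly the claimed $\eta$. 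The one place where the argument is asserted rather than established is the displayed influence inequality, and that is precisely where all the work lives: the $p$-biased Talagrand inequality naturally produces a denominator of the form $\log\bigl(1/\mathrm{Inf}_i^{(p)}(f)\bigr)$ rather than $\log n$, so to get the stated form for transitive-symmetric $f$ one must split into cases according to whether the common influence is small (say at most $n^{-1/2}$, where the $\log n$ factor appears for free) or large (where one must argue directly that $n\cdot\mathrm{Inf}_1^{(p)}$ already exceeds the target, using $p \geq p_{\text{crit}}(1-\eta) \geq 1/(4n)$ for nonconstant monotone $f$); the degradation of the hypercontractive constant on the $p$-biased two-point space also needs to be tracked quantitatively to certify the $p\log(1/p)$ dependence. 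Since the paper uses this theorem as a black box, your sketch is an acceptable account of why it is true, but it would not constitute a self-contained proof without filling in that $p$-biased KKL step.
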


We will use an immediate corollary of the above theorem.

\begin{corollary}
\label{corr:FK}
Let $f:\{0,1\}^n\rightarrow \{0,1\}$ be a nonconstant, monotone, transitive-symmetric function. Let $F:[0,1]\rightarrow[0,1]$ be the strictly increasing function defined by $F(p)=\Pr(f(p)=1)$. Let $p$ be such that $F(p) \geq \epsilon$, and let  $\eta = B\log(1/\epsilon) \cdot \frac{\log(1/p_c)}{\log n}$. Then $F(p(1 + 2\eta)) \geq 1 - \epsilon$.
\end{corollary}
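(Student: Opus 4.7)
The plan is a direct reduction to Theorem~\ref{thm:FK} combined with monotonicity of $F$. The universal constant $B$ appearing in the corollary's definition of $\eta$ can be taken at least twice the analogous constant in Theorem~\ref{thm:FK}; accordingly, I set $\eta_T \defeq \eta/2$, which is precisely of the form required by the theorem. With $p_c$ denoting the median of $F$, applying Theorem~\ref{thm:FK} with threshold parameter $\eta_T$ yields both
\[
F(p_c(1 - \eta_T)) \le \epsilon \quad\text{and}\quad F(p_c(1 + \eta_T)) \ge 1 - \epsilon.
\]

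The first step is to locate $p$ relative to $p_c$. Since $f$ is monotone, so is $F$; combined with the hypothesis $F(p) \ge \epsilon$, the left bound above forces $p \ge p_c(1 - \eta_T)$ (otherwise $F(p) \le F(p_c(1-\eta_T)) \le \epsilon$ strictly, contradicting the assumption).

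The second step is a one-line multiplicative calculation. Using $\eta = 2\eta_T$,
\[
p(1 + 2\eta) \;\ge\; p_c(1 - \eta_T)(1 + 4\eta_T) \;=\; p_c\left(1 + 3\eta_T - 4\eta_T^2\right) \;\ge\; p_c(1 + \eta_T),
\]
where the final inequality holds whenever $\eta_T \le 1/2$, itself a consequence of the standing hypothesis $\eta \le 1/2$ in Theorem~\ref{thm:FK}. Invoking monotonicity of $F$ once more together with the right bound above then gives $F(p(1 + 2\eta)) \ge F(p_c(1 + \eta_T)) \ge 1 - \epsilon$, which is exactly the claim.

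The only subtlety lies in bookkeeping of the universal constant: the factor "$2$" in the corollary's $1 + 2\eta$ is precisely what is needed to absorb the lower-order $O(\eta_T^2)$ loss produced by the expansion $(1 - \eta_T)(1 + 4\eta_T) = 1 + 3\eta_T - 4\eta_T^2$, so that the endpoint $p_c(1 + \eta_T)$ supplied by the upper-tail half of Theorem~\ref{thm:FK} is reached. Once this constant-chasing is organized, the rest of the proof is entirely mechanical.
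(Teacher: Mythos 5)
Your proof is correct and is exactly the derivation the paper leaves implicit when it calls this an ``immediate corollary'' of Theorem~\ref{thm:FK}: use the lower-tail bound plus monotonicity of $F$ to place $p \geq p_{\text{crit}}(1-\eta_T)$, then check that the factor $1+2\eta$ carries $p$ past $p_{\text{crit}}(1+\eta_T)$, where the upper-tail bound applies. Your remark that the corollary's universal constant $B$ must be taken a constant factor larger than the theorem's (so that the quadratic loss in $(1-\eta_T)(1+4\eta_T)$ is absorbed) is a genuine piece of bookkeeping the paper glosses over, and you handle it correctly.
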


In particular, in the above corollary, if for some $\epsilon \in (0,1)$ we have that $F^{-1}(\epsilon) \in (0,1)$, then the function $f$ has a sharp threshold. 

One easy observation that will allow us to use Theorem~\ref{thm:FK} is the following:

\begin{proposition}
\label{prop:symm}
The Boolean function $\text{AP}_{r}:\{0,1\}^{q \times n} \rightarrow \{0,1\}$ is transitive-symmetric.
\end{proposition}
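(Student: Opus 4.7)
The plan is to exhibit, inside the symmetric group on the coordinate set $\F_q^2$, a transitive subgroup every element of which leaves $\text{AP}_r$ invariant. The natural candidate is the group of affine translations $\tau_{c,d}:\F_q^2\to\F_q^2$ defined by $\tau_{c,d}(x,y)=(x+c,y+d)$ for $(c,d)\in\F_q^2$; this group already acts transitively on $\F_q^2$, so it remains only to check invariance of $\text{AP}_r$.

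The main step I would carry out is to verify that each $\tau_{c,d}$ permutes the set of codewords $\mathcal{C}=\RS[q,rq]$ when a codeword is identified (as in the introduction) with its graph $\{(a,p(a)):a\in\F_q\}$ for some $p\in\mathcal{P}_{rq}^q$. The image of such a graph under $\tau_{c,d}$ is $\{(a+c,p(a)+d):a\in\F_q\}$, which is the graph of the polynomial $p(X-c)+d$; this polynomial still has degree at most $rq$, so $\tau_{c,d}$ carries $\mathcal{C}$ into itself, and since $\tau_{c,d}$ is a bijection it carries $\mathcal{C}$ bijectively onto itself.

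Then I would conclude as follows. Given any two coordinates $(a,b),(a',b')\in\F_q^2$, set $\sigma=\tau_{a'-a,\,b'-b}$, so that $\sigma(a,b)=(a',b')$. For every $S\subseteq\F_q^2$ and every $w\subseteq\F_q^2$ one has $w\subseteq S$ iff $\sigma(w)\subseteq\sigma(S)$; since $\sigma$ permutes $\mathcal{C}$, the condition ``some $w\in\mathcal{C}$ lies in $S$'' is equivalent to ``some $w\in\mathcal{C}$ lies in $\sigma(S)$'', i.e.\ $\text{AP}_r(S)=\text{AP}_r(\sigma(S))$. This is exactly the invariance required by the transitive-symmetry condition in Definition~\ref{defn:transitive}, with the permutation of the $q^2$ input bits being the one induced by $\sigma$ on the coordinate set $\F_q^2$. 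I do not anticipate any real obstacle here; the entire content is the single observation that polynomial degree is invariant under an affine translation of the graph, so the translation subgroup of the affine symmetries of $\F_q^2$ preserves $\RS[q,rq]$ setwise while acting transitively on $\F_q^2$.
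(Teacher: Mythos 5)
Your proposal is correct and matches the paper's own argument: the paper likewise uses the translation $(x,y)\mapsto(x+i_2-i_1,\,y+j_2-j_1)$ and the observation that the set of degree-$\le rq$ polynomial graphs is invariant under such translations. You have simply spelled out in more detail the verification (that the translate of the graph of $p$ is the graph of $p(X-c)+d$) which the paper leaves implicit.
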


\begin{proof}
For a pair of coordinates indexed by $(i_1,j_1)$ and $(i_2,j_2)$, it is easy to see that the map $(x,y) \mapsto (x + i_2 - i_1,y+j_2-j_1)$ gives us what we need since the set of polynomials is invariant under these operations.
\end{proof}

\section{Proof of Theorem~\ref{thm:sharpthreshold}}
\label{sec:sharpthreshold}

First, we restate the theorem that we will prove in order to make a few more remarks:

\begin{theorem*}[Theorem~\ref{thm:sharpthreshold} restated]
Let $q$ be a prime power, and $r = r(q)$ and $\epsilon = \epsilon(q)$ be such that $q^{-r} \geq \frac{\log q}{q}$, and let $\epsilon \geq \omega\left(\max\left\{q^{-r}, \sqrt{q^{r - 1}  \log \left(q^{1 - r}\right)}\right\}\right)$.

Let $A_1,\ldots,A_q$ be independently chosen random subsets of $\F$ with each point picked independently with probability least $q^{-r}(1 + \epsilon)$. Then 
\[
\Prob((A_1 \times \cdots \times A_q) \cap \RS[q,rq] = \emptyset) = o(1).
\]
\end{theorem*}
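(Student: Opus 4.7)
The plan is to follow the Fourier-analytic approach sketched in the introduction. Let $f : \F_q^q \to \{0,1\}$ be the indicator of $\mathcal{C} = \RS[q, rq]$, and for each $i$ let $g_i : \F_q \to \{0,1\}$ be the indicator of $A_i$. Writing $X = |\mathcal{C} \cap (A_1 \times \cdots \times A_q)| = \sum_x f(x) \prod_i g_i(x_i)$ and applying Plancherel (Fact~\ref{fact:Plancherel}) together with Fact~\ref{prop:linspace} (so that $\widehat{f}$ is supported on $\mathcal{C}^\perp$), I obtain a decomposition of the form
\[
\frac{q^q}{|\mathcal{C}|}\, X \;=\; M \;+\; R, \qquad M := \prod_{i=1}^{q} \widehat{g_i}(0), \quad R := \!\!\sum_{\alpha \in \mathcal{C}^\perp \setminus \{0\}}\!\! \prod_{i=1}^{q} \widehat{g_i}(\alpha_i).
\]
Since $X$ is a nonnegative integer, it suffices to show that with probability $1-o(1)$ we have $M > |R|$, as this already forces $X \geq 1$.

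Bounding $M$ from below is the easy step: $\widehat{g_i}(0) = |A_i|/q$, and since $p = q^{-r}(1+\epsilon)$ with $pq \geq \log q$, a Chernoff bound~(\ref{eqn:Chernoff}) plus a union bound over $i \in [q]$ yields $|A_i|/q \geq (1-o(1))p$ simultaneously for all $i$ with probability $1-o(1)$, so $M \geq (1-o(1))\,p^{q}$ on this event. Thus the whole task reduces to proving $\E[|R|^2] = o(p^{2q})$, after which Markov's inequality finishes the job.

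For $\E[|R|^2]$, I expand $R\overline{R}$ as a double sum over $\alpha,\alpha' \in \mathcal{C}^\perp \setminus \{0\}$ and use independence of the $g_i$'s to factor the expectation into $\prod_i \E[\widehat{g_i}(\alpha_i)\overline{\widehat{g_i}(\alpha'_i)}]$. A short direct computation using character orthogonality shows that each factor vanishes unless $\alpha_i = \alpha'_i$, and when they agree the factor equals $p(1-p)/q$ if the common value is nonzero and $p^2 + p(1-p)/q$ if it is zero. Collapsing to the diagonal and grouping by the Hamming weight $w$ of $\alpha$, and using Fact~\ref{fact:dual} which gives $\mathcal{C}^\perp = \RS[q, q-rq-1]$ with minimum distance $rq+1$, I get
\[
\E[|R|^2] \;=\; \sum_{w = rq+1}^{q} W_w \left(\frac{p(1-p)}{q}\right)^{\!w} \!\left(p^2 + \frac{p(1-p)}{q}\right)^{\!q-w},
\]
where $W_w$ is the weight-$w$ enumerator of $\mathcal{C}^\perp$.

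The main technical obstacle is bounding this sum tightly. Plugging in $W_{q-i} \leq q^{q-rq}/i!$ from Proposition~\ref{prop:RSwtdist}, the ratio $\E[|R|^2]/p^{2q}$ simplifies (after routine algebra that uses $p = q^{-r}(1+\epsilon)$) to $(1+\epsilon)^{-q}$ multiplied by a truncated exponential-type series in the variable $pq = q^{1-r}(1+\epsilon)$. The extreme term $i=0$ (maximum weight) contributes $(1+\epsilon)^{-q} = o(1)$ all by itself since $\epsilon q \to \infty$, but the intermediate terms, coming from codewords of moderate weight, are where the argument is delicate: the factorial $1/i!$ in the weight bound has to beat both $(pq)^i$ and a factor of order $p^{-i}$ arising from comparing to $M^2$. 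This is precisely the role of the hypothesis $\epsilon = \omega\bigl(\sqrt{q^{r-1}\log q^{1-r}}\bigr)$, which forces $(1+\epsilon)^{-q}$ to decay fast enough to dominate the truncated series over all valid $i \leq q-rq-1$. Once $\E[|R|^2] = o(p^{2q})$ is established, Markov yields $|R| = o(p^q)$ with probability $1-o(1)$, so $|R| < M$ and Theorem~\ref{thm:sharpthreshold} follows.
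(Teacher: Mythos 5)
Your decomposition into $M+R$, the vanishing of the off-diagonal terms of $\E[|R|^2]$ via character orthogonality, the diagonal formula with $p(1-p)/q$ versus $p^2+p(1-p)/q$, and the use of Proposition~\ref{prop:RSwtdist} to control the weight enumerator all match the paper's proof. The gap is in what you call the easy step. A Chernoff bound plus a union bound over $i\in[q]$ only gives $|A_i|/q\geq(1-\delta)p$ for all $i$ with $\delta\gtrsim\sqrt{\log q/(pq)}$, whence $M\geq(1-\delta)^q p^q=e^{-\Theta(\delta q)}p^q$ with $\delta q\to\infty$; a per-factor relative error of $o(1)$ does not survive a product of $q$ factors. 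Worse, the conclusion $M\geq(1-o(1))p^q$ is simply false: since $\E[\log(|A_i|/q)]\approx\log p-(1-p)/(2pq)$ by Jensen/Taylor, one has $M=p^q e^{-\Theta(1/p)}$ with high probability, which for constant $r$ is a superpolynomially small fraction of $p^q$. Consequently $\E[|R|^2]=o(p^{2q})$ is not the right target; you need $\E[|R|^2]\ll M^2\approx p^{2q}e^{-\Theta(1/p)}$, together with a high-probability lower bound on $M$ that tolerates this loss.

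This is precisely where the paper does its careful work (Claim~\ref{claim:mainterm}): it bounds the number of indices with $\widehat{g_i}(0)<(1+0.9\epsilon)q^{-r}$ by roughly $qe^{-C\epsilon^2pq}$, verifies that no $A_i$ is empty, and concludes $M\geq q^{-rq}(1+0.9\epsilon)^{0.9q}$, i.e.\ $M\geq p^q e^{-\Theta(\epsilon q)}$. The hypothesis $\epsilon=\omega\bigl(\sqrt{q^{r-1}\log(q^{1-r})}\bigr)$, which you attribute entirely to taming the weight-enumerator series in $\E[|R|^2]$, is in fact consumed by this main-term estimate: it guarantees $\epsilon q\gg 1/p$, so the unavoidable loss $e^{-\Theta(1/p)}$ in $M$ is absorbed by the $(1+\epsilon)^{q/2}$ gain over $\sqrt{\E[|R|^2]}$. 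The series itself only contributes a factor $e^{O(pq)}$, which is what the other hypothesis $\epsilon=\omega(q^{-r})$ handles. So the architecture of your argument is the paper's, but as written the main-term step fails and needs to be replaced by the quantitative concentration argument above.
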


\paragraph{Remarks} Before we proceed to the proof, we first make some simple observations that hopefully make the technical conditions on $q,c,\epsilon$ above seem more natural.

\begin{enumerate}
\item We need $q$ to be a prime power for the existence of $\F_q$.
\item If $r$ is too large, i.e., if $q^{-r} \leq \frac{\log q}{q}(1 - \delta)$, for some $\delta>0$, then we will almost surely not contain \emph{any} codeword. Indeed, we will almost surely have some $i \in [n]$ such that $A_i = \emptyset$.
\item The reason for $\epsilon = \sqrt{q^{r - 1}  \log \left(q^{1 - r}\right)}$ is more or less the same reason as above in that this helps us prove that w.h.p., $|A|$ is not much smaller than expected, as in Claim~\ref{claim:mainterm}. This is probably not the best dependence possible, and we make no attempt to optimize. But as $q^{-r}$ gets closer to  $\frac{\log q}{q}$, then this condition gets closer to the truth.
\end{enumerate}

We now proceed to the proof.

\begin{proof}
Let us abbreviate $\F = \F_q$, denote the subspace $\mathcal{C} \leq \F^q$ to be the $\RS[q,rq]$ code. Let $f:\F^q \rightarrow \C$ be the indicator of $\mathcal{C}$, i.e., $f(x) = 1$ iff $x \in \mathcal{C}$ and $0$ otherwise. Let $A_i \subset \F$ for $i \in [n]$ and let $g_i : \F \rightarrow \{0,1\}$ be the indicator for $A_i$. Let us slightly abuse notation and also think of $g_i:\F^q \rightarrow \C$ which depends only on the $i$'th variable.

We will estimate the quantity $|\mathcal{C} \cap \left(A_1 \times \cdots \times A_q \right)| = q^q\langle f, g \rangle$. Setting $|\widehat{f}(0)| =: \rho$, standard steps yield:

\begin{align}
\rho^{-1}\langle f, g \rangle & = \rho^{-1}\sum_{\alpha}\widehat{f}(\alpha) \widehat{g}(\alpha)  \nonumber \\
& = \sum_{\alpha \in \mathcal{C}^{\perp}}\widehat{g}(\alpha) \nonumber \\
& = \sum_{\alpha \in \mathcal{C}^{\perp}}\sum_{\beta_1,\ldots ,\beta_{q-1}} \widehat{g_1}(\beta_1)\cdots \widehat{g_{n-1}}(\beta_{q-1}) \widehat{g_q}(\alpha - \sum_{i \in [q-1]}\beta_i) \nonumber \\
& = \sum_{\beta_1,\ldots ,\beta_{q-1}} \widehat{g_1}(\beta_1)\cdots \widehat{g_{q-1}}(\beta_{q-1})\sum_{\alpha \in \mathcal{C}^{\perp}} \widehat{g_q}(\alpha - \sum_{i \in [q-1]}\beta_i) \nonumber \\
& = \sum_{(\alpha_1,\ldots, \alpha_q) \in \mathcal{C}^{\perp}}\prod_{i \in [q]}\widehat{g_i}(\alpha_i) \nonumber \\
&  \geq \prod_{i \in [q]}\widehat{g_i}(0) - \left|\sum_{(\alpha_1,\ldots, \alpha_q) \in \mathcal{C}^{\perp} \setminus \{\overline{0}\}} \left(\prod_{i \in [q]}\widehat{g_i}(\alpha_i) \right) \right|. \label{eqn:main}
\end{align}

where the first equality is due to Plancherel's identity, the third inequality is using Fact~\ref{fact:conv}, the and last equality is because of the fact that $\widehat{g_i}(\beta_i)$ is nonzero only if $\operatorname{supp}(\beta_i) \subseteq \{i\}$.  Let us denote 

\[
R := \sum_{(\alpha_1,\ldots, \alpha_q) \in \mathcal{C}^{\perp} \setminus \{\overline{0}\}} \left(\prod_{i \in [q]}\widehat{g_i}(\alpha_i) \right).
\]

For $\alpha \in \F^q$, let us define $M(\alpha) = \prod_{i \in [n]}\widehat{g_i}(\alpha_i)$, to be the `monomial' corresponding to $\alpha$. So, we have $R = \sum_{\mathcal{C}^{\perp} \setminus \{\overline{0}\}}M(\alpha)$, and $|R|^2 = \sum_{\alpha,\beta \in \mathcal{\C}^{\perp} \setminus \{\overline{0}\}}M(\alpha) \overline{M(\beta)}$. By linearity of expectation, we have

\begin{align*}
\E[|R|^2] & = \sum_{\alpha, \beta \in \mathcal{C}^{\perp} \setminus \{\overline{0}\}} \E[M(\alpha)\overline{M(\beta)}] \\
&= \sum_{\alpha \in \mathcal{C}^{\perp} \setminus \{\overline{0}\}}\E[M(\alpha)\overline{M(\alpha)}] + \sum_{\alpha \neq \beta}\E[M(\alpha)\overline{M(\beta)}].
\end{align*}

For $\alpha \neq \beta$, let $t$ be a coordinate such that $\alpha_t \neq \beta_t$. We have:

\begin{align*}
q^2 \cdot \E[\widehat{g_i}(\alpha_i) \overline{\widehat{g_i}(\beta_i)}] & = \E\left[\left(\sum_{x \in \F} g_i(x)\chi_{\alpha_i}(x)\right)\left(\overline{\sum_{y \in \F} g_i(x)\chi_{\beta_i}(y)}\right) \right] \\
& = \E\left[\sum_{x,y}g_i(x)g_i(y) \chi_{\alpha_i}(x) \overline{\chi_{\beta_i}(y)} \right] \\
& = \sum_{x,y}\left(\E\left[g_i(x)g_i(y) \chi_{\alpha_i}(x) \overline{\chi_{\beta_i}(y)}  \right]\right) \\
& = \sum_x \left(\E\left[g_i(x) \ \chi_{\alpha_i}(x) \overline{\chi_{\beta_i}(x)}  \right]\right) + \sum_{x \neq y} \left(\E\left[g_i(x)g_i(y)  \chi_{\alpha_i}(x) \overline{\chi_{\beta_i}(y)} \right]\right) \\
& = p \cdot \sum_{x}\left( \chi_{\alpha_i - \beta_i}(x) \right) + p^2 \cdot \sum_{x \neq y} \left( \chi_{\alpha_i}(x) \overline{\chi_{\beta_i}(y)}  \right) \\
& = (p - p^2)\cdot \sum_{x}\left( \chi_{\alpha_i - \beta_i}(x) \right) + p^2 \cdot \sum_{x , y} \left( \chi_{\alpha_i}(x) \overline{\chi_{\beta_i}(y)} \right) \\
& = (p - p^2)\cdot \sum_{x}\left( \chi_{\alpha_i - \beta_i}(x) \right)  + p^2 \left(\sum_x \chi_{\alpha_i}(x) \right)\left(\overline{\sum_y\chi_{\beta_i}(y)}\right)\\
& = 0.
\end{align*}

The last equality is because at least one of $\alpha_i$ or $\beta_i$ is nonzero, and $\alpha_i - \beta_i$ is nonzero. Therefore, for $\alpha \neq \beta$, we have:

\begin{align*}
\E[M(\alpha)\overline{M(\beta)}] & = \E\left[\left(\prod_{i}\widehat{g_i}(\alpha_i)\right)\left(\overline{\prod_i \widehat{g_i}(\beta_i)} \right) \right] \\
& = \prod_i \left( \E[\widehat{g_i}(\alpha_i) \overline{\widehat{g_i}(\beta_i)}] \right) \\
& = 0.
\end{align*}

Where the second equality is because $A_1,\ldots A_q$ are chosen independently. For $\alpha = \beta$, it is easy to see that $\E[M(\alpha)\overline{M(\alpha)}] = \prod_{i \in [q]}|\widehat{g_i}(\alpha_i)|^2$. So, we have the identity:

\begin{align}
\E[|R|^2] & = \sum_{\alpha \in \mathcal{C}^{\perp} \setminus \{0\}}\E\left[\prod_{i \in [q]}|\widehat{g_i}(\alpha_i)|^2\right] \nonumber \\
& = \sum_{\alpha \in \mathcal{C}^{\perp} \setminus \{0\}}\prod_{i \in [q]}\E\left[|\widehat{g_i}(\alpha_i)|^2\right]. \label{eqn:ER^2}
\end{align}

The following two identities are easy to check:

\begin{align}
\E\left[|\widehat{g_i}(0)|^2\right] & = p^2 + \frac{p(1-p)}{q} \label{eqn:Eg0}\\
\E\left[|\widehat{g_i}(\alpha_i)|^2\right] & = \frac{p(1-p)}{q} \text{ for } \alpha_i \neq 0.  \label{eqn:Egalpha}
\end{align}

Equipped with \ref{eqn:ER^2}, \ref{eqn:Eg0}, \ref{eqn:Egalpha}, and Proposition~\ref{prop:RSwtdist}, we have:

\begin{align*}
\E\left[|R|^2\right] & = \sum_{i = 0}^{q-1}W_{q-i}p^{2i}\left(1 + \frac{1-p}{pq}\right)^{2i}p^{q - i}\left(\frac{1-p}{q}\right)^{q-i} \\
& \leq \cdot \left(\frac{1 - p}{q} \right)^{q} \cdot \sum_{i = 0}^{q-1}\frac{q^{q - rq}}{i!}{p^{q+i}} \left( \frac{q}{1 - p}\right)^i\left( 1 + \frac{1-p}{pq}\right)^{2i}\\
& = \cdot \left(\frac{1 - p}{q} \right)^{q} \cdot q^{q-rq }p^{q} \sum_{i = 0}^{q-1}\frac{1}{i!}\left(\frac{pq}{1 - p} \right)^i \left( 1 + \frac{1-p}{pq}\right)^{2i}\\
& \leq e \cdot(1 - p)^q \cdot q^{q- rq }\left(\frac{p}{q}\right)^{q}e^{\left( \frac{2pq}{1 - p}\right)}.
\end{align*}

Markov's inequality gives us:

\begin{equation}
\label{eqn:Rterm}
\Prob\left(|R| \geq \left(qe \cdot q^{q-rq }\left(\frac{p}{q}\right)^{q}e^{\left( \frac{2pq}{1 - p}\right)}(1 - p)^q\right)^{\frac{1}{2}}\right) \leq \frac{1}{q}.
\end{equation}

On the other hand, we have:

\begin{claim}
\label{claim:mainterm}
For $q,c$ as given above, let $\epsilon = \omega\left(\sqrt{q^{r -1} \log \left(q^{1- r}\right)} \right)$, and $p = q^{-r}(1 + \epsilon)$. Then we have:

\[
\Prob\left(\prod_{i \in [q]}\widehat{g_i}(0) \leq q^{-rq}(1 + 0.9\epsilon)^{0.9q}\right) = o(1).
\]
\end{claim}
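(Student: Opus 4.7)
The plan is to observe that $\widehat{g_i}(0) = q^{-1}\sum_{x \in \F_q}g_i(x) = |A_i|/q$, so the event in the claim is equivalent to $\prod_i |A_i| \leq q^{(1-r)q}(1 + 0.9\epsilon)^{0.9q}$. Each $|A_i|$ is an independent $\Bin(q,p)$ random variable with $p = q^{-r}(1+\epsilon)$ and $\E[|A_i|] = pq = q^{1-r}(1+\epsilon)$, so my strategy is to apply the multiplicative Chernoff bound to each $|A_i|$ individually with a carefully tuned deviation, union-bound, and then compute the resulting product lower bound.

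Concretely, set $\delta := \epsilon/10$. The Chernoff bound~(\ref{eqn:Chernoff}) gives
\[
\Prob\bigl(|A_i| \leq (1-\delta)pq\bigr) \;\leq\; e^{-\delta^2 pq/2} \;=\; e^{-\Omega(\epsilon^2 q^{1-r})},
\]
and a union bound over $i \in [q]$ shows that the event $\mathcal{E} := \{\forall i\colon |A_i| \geq (1-\delta)pq\}$ fails with probability at most $q \cdot e^{-\Omega(\epsilon^2 q^{1-r})}$. The hypothesis $\epsilon = \omega(\sqrt{q^{r-1}\log(q^{1-r})})$ in tandem with the standing assumption $q^{1-r} \geq \log q$ is precisely what is needed to push this probability to $o(1)$.

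On $\mathcal{E}$,
\[
\prod_{i \in [q]}|A_i| \;\geq\; \bigl((1-\epsilon/10)(1+\epsilon)\bigr)^{q} q^{(1-r)q} \;=\; \bigl(1 + 0.9\epsilon - 0.1\epsilon^2\bigr)^q q^{(1-r)q}.
\]
The remaining algebraic inequality $1 + 0.9\epsilon - 0.1\epsilon^2 \geq (1+0.9\epsilon)^{0.9}$ follows from Bernoulli's inequality $(1+0.9\epsilon)^{0.9} \leq 1 + 0.81\epsilon$, which reduces matters to $0.09\epsilon \geq 0.1\epsilon^2$, i.e., $\epsilon \leq 0.9$. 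For $\epsilon = \Omega(1)$ the claim is strictly easier and the same scheme works with $\delta$ replaced by a small absolute constant.

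The main technical point to watch is the second paragraph: quantitatively translating the $\omega$-hypothesis on $\epsilon$ into a genuine $o(1)$ bound after a union bound is delicate in the extreme regime $r \to 1$, where $\log(q^{1-r}) = (1-r)\log q$ can be as small as $\log\log q$, so $\epsilon^2 q^{1-r}/\log q \to \infty$ is not quite automatic from the stated hypothesis. Should the bare Chernoff plus union bound prove too weak here, one could substitute a sharper concentration inequality such as Bernstein applied to $\sum_i \log|A_i|$ (using $\E[\log|A_i|] = \log(pq) - O(1/(pq))$ and $\Var[\log|A_i|] = O(1/(pq))$) to control the product directly. Apart from this bookkeeping every step is a routine application of standard inequalities, so I do not anticipate any serious obstacle.
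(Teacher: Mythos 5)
Your reduction to $\prod_i |A_i| \geq q^{(1-r)q}(1+0.9\epsilon)^{0.9q}$ and the closing algebra via Bernoulli's inequality are fine, but the core of your argument --- a per-coordinate Chernoff bound followed by a union bound over all $q$ coordinates --- has a genuine gap, and it is exactly the one you flag yourself. The union bound costs a factor of $q$, so you need $\epsilon^2 pq = \omega(\log q)$, whereas the hypothesis only supplies $\epsilon^2 pq = \omega\bigl(\log(q^{1-r})\bigr) = \omega(\log(pq))$. When $q^{1-r}$ is polylogarithmic in $q$ (the regime the theorem explicitly allows, down to $q^{1-r} = \log q$), $\log(pq)$ is of order $\log\log q$ and the union bound simply does not close. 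Your fallback (Bernstein on $\sum_i \log|A_i|$) is plausible but is left as a sketch, with the nontrivial issues ($|A_i|=0$, the Jensen gap between $\E[\log|A_i|]$ and $\log(pq)$) unaddressed; as written, the proof is incomplete in precisely the hard parameter range.

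The paper's proof avoids the union bound entirely: it only needs the event $\{\widehat{g_i}(0) < (1+0.9\epsilon)q^{-r}\}$ to have probability at most $e^{-C\epsilon^2 pq} = (pq)^{-\omega(1)}$ per coordinate, and then bounds the \emph{number} $Y$ of bad coordinates by $t = 2qe^{-C\epsilon^2 pq}$ with high probability (Chernoff applied to the sum of the independent indicators $Y_i$), separately ruling out empty $A_i$'s via $q^{-r}\geq \frac{\log q}{q}$. Each bad-but-nonempty coordinate still contributes $\widehat{g_i}(0)\geq 1/q$, so the multiplicative loss from the bad coordinates is $e^{-O(t\log q)}$, which is absorbed by shaving the exponent from $q$ to $0.9q$ once one checks $t + \frac{2t\log q}{\epsilon} = o(q)$ --- and that check only uses $\epsilon^2 pq = \omega(\log(pq))$, i.e., the stated hypothesis. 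Tolerating a vanishing fraction of bad coordinates, rather than insisting all $q$ of them behave, is the idea your argument is missing.
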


The proof is postponed to the Appendix.

So, using~\ref{eqn:main}, Claim~\ref{claim:mainterm}, and~\ref{eqn:Rterm}, and setting $p = q^{-r}(1 + \epsilon)$ we have that with probability at least $1 - o(1)$,

\begin{align*}
\rho^{-1}\langle f,g \rangle & \geq q^{-rq}(1 + 0.9\epsilon)^{0.9q} -  q\cdot eq^{-rq}(1 + \epsilon)^{\frac{q}{2}}e^{\frac{2pq}{2(1-  p)}}(1 - p)^{\frac{q}{2}} \\
& \geq q^{-rq}(1 + \epsilon)^{\frac{q}{2}}\left((1 + (\epsilon/2))^{0.4q} - eq \cdot e^{\frac{2pq}{2(1 - p)}}(1 - p)^{\frac{q}{2}}\right).
\end{align*}

Where the inequality follows from the fact that for $x \in [0,1]$,
\[
\frac{(1 + 0.9x)^{0.9}}{(1+x)^{0.5}} \geq (1 + 0.5x)^{0.4} .
\]

It remains to check that if $\epsilon  = \omega\left(q^{-r} \right)$, then $\rho^{-1}\langle f,g \rangle > 0$, which completes the proof.
\end{proof}

\section{Proof of Theorem~\ref{thm:fullrange}}
\label{sec:fullrange}

Here, we address the case when $r  = O\left(\frac{1}{\log q}\right)$. In this case, we observe that Theorem~\ref{thm:sharpthreshold} does not give us a sharp threshold for the random list recovery since in this case, $p \in (0,1)$. However, this case can be handled by the second moment method and Theorem~\ref{thm:FK}.

\begin{proof}[Proof of Theorem~\ref{thm:fullrange}]
Let us use $X$ to denote the number of codewords of $\text{RS}[q,rq]$ contained in a randomly chosen set $S$. Linearity of expectation gives us

\begin{equation}
\label{eqn:EX}
\E[X] = q^{rq+1}p^q
\end{equation}

Again, we have that if $p = q^{-r}(1 - \epsilon)$, we have that $\E[X] = (1 - \epsilon)^q$. And so, by Union Bound, we have that with high probability $X = 0$. On the other hand, if $p = q^{-r}(1 + \epsilon)$, we will show that with high probability, $X > 0$. We start off by computing the second moment.

\begin{lemma}
\label{lem:easysecondmoment}
We have 
\[ 
\E[X^2] \leq \E[X] + e^{\frac{1}{p}}\E^2[X]
\]
\end{lemma}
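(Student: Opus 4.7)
The plan is to expand the second moment combinatorially by splitting codeword pairs according to their agreement. Writing $X = \sum_{w \in \mathcal{C}} X_w$ with $X_w = \mathbbm{1}[w \subseteq A_1 \times \cdots \times A_q]$ and viewing each codeword as a subset of $\F_q^2$ of size $q$, independence of the $A_i$'s over the columns gives $\E[X_w X_{w'}] = p^{|w \cup w'|}$. Separating the diagonal contribution where $w = w'$ yields $\E[X]$, so the task reduces to bounding
\[
T := \sum_{w \neq w'} p^{|w \cup w'|} = \sum_{i=0}^{rq} N_i \, p^{2q - i},
\]
where $N_i$ is the number of \emph{ordered} pairs of distinct codewords agreeing in exactly $i$ positions (and we use that two distinct $\RS[q, rq]$ codewords agree in at most $rq$ coordinates, by the distance property).

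Next I would invoke the group structure: since $\mathcal{C}$ is linear, fixing $w$ and translating by $w' - w$ shows that the number of $w' \in \mathcal{C}$ with $|w \cap w'| = i$ is exactly $W_{q - i}$, the number of codewords in $\mathcal{C}$ of Hamming weight $q - i$. Hence $N_i = |\mathcal{C}| \cdot W_{q - i}$, and the weight bound from Proposition~\ref{prop:RSwtdist}, $W_{q-i} \leq q^{rq+1}/i!$, gives
\[
T \leq |\mathcal{C}| \sum_{i=0}^{rq} \frac{q^{rq+1}}{i!} p^{2q - i} = q^{2(rq+1)} p^{2q} \sum_{i=0}^{rq} \frac{1}{i! \, p^i} \leq \E^2[X] \cdot e^{1/p},
\]
using $\E[X] = q^{rq+1} p^q$ from~(\ref{eqn:EX}) and the series bound $\sum_{i \geq 0} (1/p)^i / i! = e^{1/p}$. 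Combining with the diagonal contribution yields $\E[X^2] \leq \E[X] + e^{1/p}\, \E^2[X]$, as required.

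The only nontrivial step is the counting identity $N_i = |\mathcal{C}| \cdot W_{q-i}$, which hinges on the linearity of $\mathcal{C}$ and on identifying ``agreement in $i$ positions'' with ``the difference has Hamming weight $q - i$.'' Once this is in place, the bound on $T$ is a routine application of Proposition~\ref{prop:RSwtdist} and a geometric-style summation. I do not anticipate serious technical difficulties; the content of the lemma is really just repackaging the weight enumerator of Reed--Solomon codes into a second-moment estimate.
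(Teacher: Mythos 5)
Your proposal is correct and follows essentially the same route as the paper: split off the diagonal, use linearity of $\mathcal{C}$ to count ordered pairs at each agreement level via the weight enumerator, apply Proposition~\ref{prop:RSwtdist}, and bound the resulting sum by $e^{1/p}$. The counting identity $N_i = |\mathcal{C}|\cdot W_{q-i}$ that you flag as the key step is exactly the one the paper uses implicitly.
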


\begin{proof}
Let us denote $\mathcal{C} = \RS[q,rq]$. For every $w \in \mathcal{C}$, let us denote $X_w$ for the indicator random variable for the event $\{w \subset S\}$. We have:
\begin{align*}
\E[X^2] & = \sum_{w \in \mathcal{C}}\E[X_w] + \sum_{\substack{ w_1,w_2 \in \mathcal{C} \\ w_1 \neq w_2}}\E[X_{w_1}X_{w_2}] \\
& = q^{rq+1}p^q + \sum_{i = 0}^{rq+1}\sum_{\substack{w_1,w_2 \in \mathcal{C}\\ \Delta(w_1,w_2) = q - i}}p^{2q - i} \\
& = q^{rq+1}p^q + q^{rq+1}\sum_{i = 0}^{n - rq}|W_{q-i}|p^{2q - i} \\
& \leq q^{rq+1}p^q + q^{rq+1}\sum_{i = 0}^{rq+1}\frac{q^{rq+1}}{i!}p^{2q - i} \\
& \leq q^{k}p^q + e^{\frac{1}{p}}\left( q^{rq+1} p^{q}\right)^2 \\
& = \E[X] + e^{\frac{1}{p}}\E^2[X].
\end{align*}
\end{proof}

When $\E[X] = \omega(1)$, one can bound second moment by $2e^{\frac{1}{p}}\E^2[X]$. The Paley-Zygmund inequality~\ref{eqn:PaleyZygmund} immediately gives:

\[
\Prob(X \geq 0) \geq \frac{1}{2e^{\frac{1}{p}}}.
\]

Now, Corollary~\ref{corr:FK} gives us that if $p \geq q^{-r}\left(1 + \omega_{p}\left(\frac{1}{\log q} \right)\right)$, then $\Prob(X > 0) = 1 - o(1)$.

This, combined with Lemma~\ref{lem:smallp} and Theorem~\ref{thm:sharpthreshold} finishes the proof.
\end{proof}

\section{Technical lemmas}
\label{sec:appendix}

\subsection{Proofs from Section~\ref{sec:sharpthreshold}}
\label{sec:asec3}
\begin{proof}[Proof of Claim~\ref{claim:mainterm}]
For $i \in [n]$,  let define the indicator random variables
\[
Y_i = \mathbbm{1}\left[\{\widehat{g_i}(0) < (1 + 0.9\epsilon)q^{-r}\}\right]
\] 
and 
\[
Z_i = \mathbbm{1}\left[\{\widehat{g_i}(0) = 0\}\right].
\]

Let $Y = \sum_iY_i$, and $Z = \sum_iZ_i$. First off, Chernoff bound gives us that there is a constant $C  < \frac{1}{200}$ such that $\Prob(Y_i = 1) \leq e^{- C\epsilon^2 pq}$, so we have $\E[Y] \leq ne^{- C\epsilon^2pq}$. Since all the $Y_i$'s are independent, Chernoff bound again gives us that
\begin{equation}
\label{eqn:A}
\Prob\left(Y \geq 2q e^{-C\epsilon^2 pq}\right) = o(1).
\end{equation}
Moreover, since we have $q^{-r} \geq \frac{\log q}{q}$, we have
\begin{equation}
\label{eqn:B}
\Prob(Z > 0) =  o(1)
\end{equation}

Let us abbreviate $t = 2q e^{-C\epsilon^2 pq}$. Taking~\ref{eqn:A} and~\ref{eqn:B} into consideration, we have that with probability at least $1 - o(1)$,

\begin{align*}
\prod_{i \in [q]}\widehat{g_i}(0) & \geq p^{q}(1 + 0.9\epsilon)^{q - t}e^{-t\log q} \\
& \geq (1 + 0.9\epsilon)^{q - t - \frac{2t \log q}{\epsilon}} 
\end{align*}

It remains to check that if $\epsilon \gg \sqrt{\frac{\log \left(pq\right)}{pq}}$, then $t + \frac{2t \log q}{\epsilon} \leq 0.1q$. With this in mind, we compute

\begin{align*}
t + \frac{2t \log q}{\epsilon} & \leq \frac{3t \log q}{\epsilon} \\
& \leq 6q \cdot \frac{1}{(pq)^{\omega(1)}}\cdot\sqrt{pq} \log q \\
& = o(q)
\end{align*}

which finishes the proof.

\end{proof}

\subsection{Proofs from Section~\ref{sec:AC0}}
\label{sec:asec4}

\begin{proof}[Proof of Lemma~\ref{lem:bias}]
Consider the sequence of integers $\{k_i\}_{i \in \N}$ such that for every $i$, $k_i$ is the largest such that 

\[
\prod_{j = 1}^i\left(1 - \frac{1}{2^j}\right)^{k_j} \geq p.
\]

We make a basic observation:

\begin{observation}
\label{obs:bias}
We have that $k_1  = \lfloor \log_2 (1/p) \rfloor \leq s$ and for all $j \geq 2$, we have that $k_j \leq 3$.
\end{observation}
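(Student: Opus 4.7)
My plan is to handle the two claims in turn. The first is essentially tautological: by the definition of the sequence, $k_1$ is the largest integer with $(1-1/2)^{k_1} = 2^{-k_1} \geq p$, which is exactly $k_1 = \lfloor \log_2(1/p) \rfloor$. The hypothesis $2^{-s} \leq p$ of Lemma~\ref{lem:bias} immediately gives $\log_2(1/p) \leq s$, so $k_1 \leq s$.

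For the claim $k_j \leq 3$ when $j \geq 2$, I would first establish a general ``tightness'' property of the greedy construction. Writing $P_i := \prod_{\ell=1}^{i}(1-1/2^\ell)^{k_\ell}$, the maximality of $k_i$ in its defining inequality means that $P_{i-1}(1-1/2^i)^{k_i+1} < p$, i.e., $P_i \cdot (1 - 1/2^i) < p$. Rearranging, $P_i/p < 2^i/(2^i - 1)$, so each partial product overshoots $p$ by at most this small factor. Applied with $i = j-1$ this gives $p/P_{j-1} > 1 - 1/2^{j-1}$.

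On the other hand, the condition $(1-1/2^j)^{k_j} \geq p/P_{j-1}$ defining $k_j$ now forces $(1-1/2^j)^{k_j} > 1 - 1/2^{j-1}$. So it suffices to show that $(1-1/2^j)^4 \leq 1 - 1/2^{j-1}$ for every $j \geq 2$. Substituting $x := 1/2^j \in (0, 1/4]$ and expanding, this reduces to the polynomial inequality $x^3 - 4x^2 + 6x - 2 \leq 0$. The derivative $3x^2 - 8x + 6$ has discriminant $64 - 72 < 0$ and positive leading coefficient, so the cubic is strictly increasing on $(0, 1/4]$; its value at the right endpoint $x = 1/4$ is $-47/64 < 0$, which finishes it. The whole argument is essentially bookkeeping around the maximality definition, and the only step where one has to be careful is keeping strict versus non-strict inequalities consistent when chaining the bound on $P_{j-1}/p$ with the defining inequality for $k_j$.
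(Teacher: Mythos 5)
Your proof is correct, and in fact the paper states Observation~\ref{obs:bias} without any proof at all, so there is nothing to compare against: your argument supplies the missing justification. Both halves check out --- the identification $k_1=\lfloor\log_2(1/p)\rfloor\leq s$ follows directly from the definition and the hypothesis $2^{-s}\leq p$, and your ``tightness'' bound $p/P_{j-1}>1-1/2^{j-1}$ from the maximality of $k_{j-1}$, combined with the verified inequality $(1-1/2^j)^4\leq 1-1/2^{j-1}$ for $j\geq 2$ (equivalently $x^3-4x^2+6x-2\leq 0$ on $(0,1/4]$, which your monotonicity-plus-endpoint check establishes), correctly forces $k_j\leq 3$.
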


Let $\ell$ be the largest such that  $k_{\ell} > 0$ and $\sum_{i \in [\ell]}i\cdot k_i < s^2$. Let $t = \sum_{i \in [\ell]}i\cdot k_i $. Consider the $CNF$ given by
\[
C_p = \bigwedge_{j \in [\ell]}\left( \bigwedge_{i \in k_j}C_i^j\right)
\]

where the clause $C_i^j$ is an $\lor$ of $j$ independent variables. We first estimate $p_{\frac{1}{2}} := \Prob(C_p = 1)$ when $X_1,\ldots, X_n \sim \Ber \left(\frac{1}{2}\right)$.

Using the fact that $k_1 \leq s$ and Observation~\ref{obs:bias}, we have $\ell = \Omega(\sqrt{t})$. Therefore
\[
p \leq \Prob(C_p = 1)  \leq p\left( 1 - \frac{1}{2^{{\ell}+1}}\right)^{-4} \leq p\left(1 + \frac{4}{2^{\Omega(\sqrt{t})}} \right).
\]

And so for $X_1,\ldots, X_n \sim \Ber \left(\frac{1}{2} + \epsilon \right)$ we bound

\begin{align*}
\Prob(C_p = 1) & = \prod_{j = 1}^\ell\left(1 - \left(\frac{1}{2} - \epsilon\right)^j\right)^{k_j}  ~~~~~~~~=  \prod_{j = 1}^\ell\left(1 - \frac{1}{2^j}\left(1 - 2\epsilon\right)^j\right)^{k_j} \\
& \geq   \prod_{j = 1}^\ell\left(1 - \frac{1}{2^j}\left(1 - \epsilon j\right)\right)^{k_j}  ~~~~~~~\geq \prod_{j = 1}^\ell\left(1 - \frac{1}{2^j}\right)^{k_j}\left(1 + (\epsilon j/2^{j-1})\right)^{k_j} \\
& \geq p_{\frac{1}{2}}(C_p = 1)\left(1 + \epsilon \sum_{j = 1}^{\ell}\frac{jk_j}{2^{j-1}} \right)  \geq p(1 + \epsilon k_1 ).
\end{align*}

Similarly,  for $X_1,\ldots, X_n \sim \Ber \left(\frac{1}{2} - \epsilon \right)$:

\begin{align*}
\Prob(C_p = 1) & = \prod_{j = 1}^\ell\left(1 - \left(\frac{1}{2} + \epsilon\right)^j\right)^{k_j} ~~=  \prod_{j = 1}^\ell\left(1 - \frac{1}{2^j}\left(1 + 2\epsilon\right)^j\right)^{k_j} \\
& \leq   \prod_{j = 1}^\ell\left(1 - \frac{1}{2^j}\left(1 + 4\epsilon j\right)\right)^{k_j}  \leq \prod_{j = 1}^\ell\left(1 - \frac{1}{2^j}\right)^{k_j}\left(1 - \frac{\epsilon j}{2^{j}}\right)^{k_j} \\
& \leq p_{\frac{1}{2}}(C_p = 1)(1 - \Omega(k_1 \epsilon ) )  ~~\leq p\left(1 + \frac{1}{2^{\Omega(\sqrt{t})}} - \Omega(\epsilon k_1)\right).
\end{align*}

\end{proof}

\end{document}